\newtheorem{rrule}{Reduction-Rule} 
\title{Solving Co-Path/Cycle Packing and Co-Path Packing Faster Than $3^k$} 
\titlerunning{Solving Co-Path/Cycle Packing and Co-Path Packing Faster than $3^k$} 
\author{Yuxi Liu}{University of Electronic Science and Technology of China, Chengdu, China}{202211081321@std.uestc.edu.cn}{}{}
\author{Mingyu Xiao}{University of Electronic Science and Technology of China, Chengdu, China}{myxiao@uestc.edu.cn}{https://orcid.org/0000-0002-1012-2373}{}
\authorrunning{J. Open Access and J.\,R. Public}
\keywords{Graph Algorithms, Parameterized Algorithms, Co-Path/Cycle Packing, Co-Path Packing, Cut \& Count, Path Decomposition} 
\begin{document}

\maketitle

\begin{abstract}
    The \textsc{Co-Path/Cycle Packing} problem (resp. The \textsc{Co-Path Packing} problem) asks whether we can delete at most $k$ vertices from the input graph such that the remaining graph is a collection of induced paths and cycles (resp. induced paths).
    These two problems are fundamental graph problems that have important applications in bioinformatics.
    Although these two problems have been extensively studied in parameterized algorithms, it seems hard to break the running time bound $3^k$.
    In 2015, Feng et al. provided an $O^*(3^k)$-time randomized algorithms for both of them. Recently, Tsur showed that they can be solved in $O^*(3^k)$ time deterministically.
    In this paper, by combining several techniques such as path decomposition, dynamic programming, cut \& count, and branch-and-search methods, we show that \textsc{Co-Path/Cycle Packing} can be solved in $O^*(2.8192^k)$ time deterministically and \textsc{Co-Path Packing} can be solved in $O^*(2.9241^{k})$ time with failure probability $\leq 1/3$.
    As a by-product, we also show that the \textsc{Co-Path Packing} problem can be solved in $O^*(5^p)$ time with probability at least 2/3 if a path decomposition of width $p$ is given. 
\end{abstract}

\section{Introduction}
In the classic \textsc{Vertex Cover} problem, the input is a graph $G$ and an integer $k$, and the problem asks whether it is possible to delete at most $k$ vertices such that the maximum degree of the remaining graph is at most 0.
A natural generalization of \textsc{Vertex Cover} is that: can we delete at most $k$ vertices such that the maximum degree of the remaining graph is at most $d$?
Formally, for every integer $d \geq 0$, we consider the following \textsc{$d$-Bounded-Degree Vertex Deletion} problem.

\noindent\rule{\linewidth}{0.2mm}
\textsc{$d$-Bounded-Degree Vertex Deletion}\\
\textbf{Instance:} A graph $G=(V, E)$ and two integers $d$ and $k$.\\
\textbf{Question:} Is there a set of at most $k$ vertices whose removal from $G$ results in a graph with maximum degree at most $d$?\\
\rule{\linewidth}{0.2mm}

The \textsc{$d$-Bounded-Degree Vertex Deletion} problem finds applications in computational biology~\cite{fellows2011generalization} and social network analysis~\cite{seidman1978graph}.
In this paper, we focus on the case that $d = 2$, which is referred to as the \textsc{Co-Path/Cycle Packing} problem.
The \textsc{Co-Path/Cycle Packing} problem also has many applications in computational biology~\cite{chen2010linear}.
Formally, the \textsc{Co-Path/Cycle Packing} problem is defined as follows.

\noindent\rule{\linewidth}{0.2mm}
\textsc{Co-Path/Cycle Packing}\\
\textbf{Instance:} A graph $G=(V, E)$ and an integer $k$.\\
\textbf{Question:} Is there a vertex subset $S\subseteq V$ of size at most $k$ whose deletion makes the graph a collection of induced paths and cycles?\\
\rule{\linewidth}{0.2mm}
  
We also focus on a similar problem called \textsc{Co-Path Packing}, which allows only paths in the remaining graph, defined as follows.

\noindent\rule{\linewidth}{0.2mm}
\textsc{Co-Path Packing}\\
\textbf{Instance:} A graph $G=(V, E)$ and an integer $k$.\\
\textbf{Question:} Is there a vertex subset $S\subseteq V$ of size at most $k$ whose deletion makes the graph a collection of induced paths?\\
\rule{\linewidth}{0.2mm}


\vspace{2mm}
\noindent\textbf{Related Work.}
In this paper, we mainly consider parameterized algorithms.
When $d$ is an input, the general \textsc{$d$-Bounded-Degree Vertex Deletion} problem is W[2]-hard with the parameter $k$~\cite{fellows2011generalization}.
Xiao~\cite{xiao2016parameterized} gave a deterministic algorithm that solves \textsc{$d$-Bounded-Degree Vertex Deletion} in $O^*((d+1)^k)$ time for every $d\geq 3$, which implies that the problem is FPT with parameter $k+d$.
In term of treewidth (tw), van Rooij~\cite{van2021generic} gave an $O^*((d + 2)^{tw})$-time algorithm to solve \textsc{$d$-Bounded-Degree Vertex Deletion} for every $d\geq 1$.
Lampis and Vasilakis~\cite{lampis2023structural} showed that no algorithm can solve \textsc{$d$-Bounded-Degree Vertex Deletion} in time $(d + 2 - \epsilon)^{tw}n^{O(1)}$, for any $\epsilon > 0$ and for any fixed $d\geq 1$ unless the SETH is false.
The upper and lower bounds have matched.
This problem has also been extensively studied in kernelization. Fellows et al.~\cite{fellows2011generalization} and Xiao~\cite{xiao2017generalization} gave a generated form of the NT-theorem that then provided polynomial kernels for the problem with each fixed $d$.

For each fixed small $d$, \textsc{$d$-Bounded-Degree Vertex Deletion} has also been paid certain attention.
The \textsc{$0$-Bounded-Degree Vertex Deletion} problem is referred to as \textsc{Vertex Cover}, which is one of the most fundamental problems in parameterized algorithms.
For a long period of time, the algorithm of Chen et al.~\cite{chen2010improved} held the best-known running time of $O^*(1.2738^k)$,
and recently this result was improved by Harris and Narayanaswamy~\cite{harris2022faster} to $O^*(1.25284^k)$. 
The \textsc{$1$-Bounded-Degree Vertex Deletion} problem is referred to as \textsc{$P_3$ Vertex Cover}, where Tu~\cite{tu2015fixed} achieved a running time of $O^*(2^k)$ by using iterative compression.
This result was later improved by Katreni{\v{c}}~\cite{katrenivc2016faster} to $O^*(1.8127^k)$.
Then, Chang et. al.~\cite{chang2016fixed} gave an $O^*(1.7964^k)$-time polynomial-space algorithm and an $O^*(1.7485^k)$-time exponential-space algorithm.
Xiao and Kou~\cite{xiao2017kernelization} gave an $O^*(1.7485^k)$-time polynomial-space algorithm.
This result was improved by Tsur~\cite{tsur2019parameterized} to $O^*(1.713^k)$ through a branch-and-search approach and finally by {\v{C}}erven{\`y} and Such{\`y}~\cite{vcerveny2023generating} to $O^*(1.708^k)$ through using an automated framework for generating parameterized branching algorithms.

\textsc{Co-Path/Cycle Packing} is the special case of \textsc{$d$-Bounded-Degree Vertex Deletion} with $d = 2$.
A closely related problem is \textsc{Co-Path Packing}, where even cycles are not allowed in the remaining graph.
Chen et al.~\cite{chen2010linear} initially showed that \textsc{Co-Path/Cycle Packing} and \textsc{Co-Path Packing} can be solved in $O^*(3.24^k)$ time, and a finding subsequently refined to $O^*(3.07^k)$ for \textsc{Co-Path/Cycle Packing} by Xiao~\cite{xiao2016parameterized}.
Feng et al.~\cite{feng2015randomized} introduced a randomized $O^*(3^k)$-time algorithm for the \textsc{Co-Path Packing}, which also works for \textsc{Co-Path/Cycle Packing}. However, we do not know how to derandomize this algorithm.
Recently, Tsur~\cite{tsur2022faster} provided $O^*(3^k)$-time algorithms solving \textsc{Co-Path/Cycle Packing} and \textsc{Co-Path Packing} deterministically.
It seems that the bound $3^k$ is hard to break for the two problems.
As shown in Tsur's algorithms~\cite{tsur2022faster}, many cases, including the case of handling all degree-4 vertices, lead to the same bottleneck.
One of the main targets in this paper is to break those bottlenecks. 
Previous results and our results for
\textsc{Co-Path/Cycle Packing} and \textsc{Co-Path Packing} are summarized in Table~\ref*{related-work}.

\begin{table}[!t]
    \begin{center}
    \caption{Algorithms for \textsc{Co-Path/Cycle Packing} and \textsc{Co-Path Packing}}
    \begin{tabular}{c|c|c|c|c|c}
        \hline
        Years & References & \textsc{Co-Path/Cycle} & Deterministic & \textsc{Co-Path} & Deterministic \\
        \hline
        2010 & Chen et al.~\cite{chen2010linear}  & $O^*(3.24^k)$ & Yes & $O^*(3.24^k)$ & Yes\\
        2015 & Feng et al.~\cite{feng2015randomized} & $O^*(3^k)$ & No & $O^*(3^k)$ & No\\
        2016 & Xiao~\cite{xiao2016parameterized} & $O^*(3.07^k)$ & Yes & - & -\\
        2022 & Tsur~\cite{tsur2022faster} & $O^*(3^k)$ & Yes & $O^*(3^k)$ & Yes \\
        2024 & This paper & $O^*(2.8192^k)$ & Yes & $O^*(2.9241^{k})$ & No \\
        \hline
    \end{tabular}\label{related-work}
    \end{center}
\end{table}

\noindent\textbf{Our Contributions.}
The main contributions of this paper are a deterministic algorithm for \textsc{Co-Path/Cycle Packing} running in $O^*(2.8192^k)$ time and $O^*(2.5199^{k})$ space and a randomized algorithm for \textsc{Co-Path Packing} running in $O^*(2.9241^{k})$ time and space with failure probability $\leq 1/3$.
To obtain this result, we need to combine path decomposition, dynamic programming, branch-and-search and some other techniques.
In the previous $O^*(3^k)$-time algorithms for both \textsc{Co-Path/Cycle Packing} and \textsc{Co-Path Packing}, many cases,
including dealing with degree-4 vertices and several different types of degree-3 vertices,
lead to the same bottleneck.
It seems very hard to avoid all the bottleneck cases by simply modifying the previous algorithms.
The main idea of the algorithm in this paper is as follows. 
We first design some new reduction and branching rules to handle some good structures of the graph.
After this, we prove that the remaining graph has a small pathwidth and then design an efficient dynamic programming algorithm based on a path decomposition.
Specifically, our algorithm firstly runs the branch-and-search algorithm to handle the degree-$\geq 5$ vertices and the degree-4 vertices adjacent to at least one degree-$\geq 3$ vertex.
In the branch-and-search phase, our algorithm runs in $O^*(2.8192^k)$ time for both \textsc{Co-Path/Cycle Packing} and \textsc{Co-Path Packing}.
When branching steps cannot be applied, we can construct a path decomposition of width at most $2k/3 + \epsilon k$ for any $\epsilon > 0$ and call our dynamic programming algorithm.
The running time and space of the dynamic programming algorithm are bounded by $O^*(2.5199^{k})$ for \textsc{Co-Path/Cycle Packing} and bounded by $O^*(2.9241^{k})$ with failure probability $\leq 1/3$ for \textsc{Co-Path Packing}.
Therefore, the whole algorithm runs in $O^*(2.8192^k)$ time and $O^*(2.5199^{k})$ space for \textsc{Co-Path/Cycle Packing} and runs in $O^*(2.9241^{k})$ time and space with failure probability $\leq 1/3$ for \textsc{Co-Path Packing}.


For the dynamic programming algorithms based on a path decomposition, 
the first algorithm is an $O^*((d + 2)^p)$-time algorithm to solve \textsc{$d$-Bounded-Degree Vertex Deletion} for every $d\geq 1$, where $p$ is the width of the given path decomposition.
This was firstly found in~\cite{van2021generic}.
We also present it in our way to make this paper self-contained.
The second algorithm is designed for \textsc{Co-Path Packing}.
In this algorithm, we use an algorithm framework called cut \& count~\cite{cygan2011solving}.
Given a path decomposition of width $p$, we show that \textsc{Co-Path Packing} can be solved in $O^*(5^p)$ time and space with failure probability $\leq 1/3$.

\vspace{2mm}
\noindent\textbf{Reading Guide.}
Section \ref*{preliminary-section} begins with a review of fundamental definitions and the establishment of notation.
In Section \ref*{cpcp-section} we solve \textsc{Co-Path/Cycle Packing} and in section \ref*{cpp-section} we solve \textsc{Co-Path Packing}.
In Section \ref*{proper-section}, we show that a proper graph has a small pathwidth and present a dynamic programming algorithm for \textsc{Co-Path/Cycle Packing}. 
In Section \ref*{cpcp-algorithm-section}, we give the branch-and-search algorithm for \textsc{Co-Path/Cycle Packing}.
Similarly, in Section \ref*{cpp-dp-section}, we present a randomized dynamic programming algorithm for \textsc{Co-Path Packing}. 
In Section \ref*{cpp-algorithm-section}, we give the branch-and-search algorithm for \textsc{Co-Path Packing}. 
Most of the content of this branching algorithm is the same as the branching algorithm for \textsc{Co-Path/Cycle Packing}.
Due to lack of space, Sections 1-3 can be seen as the short version. 
In Section \ref*{conclusion-section}, we give a conclusion.

The proof of theorems marked $\clubsuit$ is placed in the appendix.

\section{Preliminaries}\label{preliminary-section}


In this paper, we only consider simple and undirected graphs.
Let $G=(V, E)$ be a graph with $n=|V|$ vertices and $m=|E|$ edges.
A vertex $v$ is called a \emph{neighbor} of a vertex $u$ if there is an edge $uv \in E$.
Let $N(v)$ denote the set of neighbors of $v$. For a subset of vertices $X$, let $N(X)=\bigcup_{v\in X}N(v)\setminus X$ and $N[X]=N(X)\cup X$.
We use $d(v)=|N(v)|$ to denote the \emph{degree} of a vertex $v$ in $G$.
A vertex of degree $d$ is called a \emph{degree-$d$ vertex}.
For a subset of vertices $X\subseteq V$, the subgraph induced by $X$ is denoted by $G[X]$. The induced subgraph $G[V\setminus X]$ is also written as $G\setminus X$.
A \emph{path} $P$ in $G$ is a sequence of vertices $v_1, v_2,\cdots, v_t$ such that for any $1\leq i < t, \{v_{i}v_{i+1}\} \in E$.
Two vertices $u$ and $v$ are \emph{reachable} to each other if there is a path $v_1, v_2,\cdots, v_t$ such that $v_1 = u$ and $v_t = v$.
A \emph{connected component} of a graph is a maximum subgraph such that any two vertices are reachable to each other.
A vertex subset $S$ is called a \emph{cPCP-set} of graph $G$ if the degree of any vertex in $G\setminus S$ is at most 2.
A vertex subset $S$ is called a \emph{cPP-set} of graph $G$ if $G\setminus S$ is a collection of disjoint paths.
For a graph $G$, we will use $V(G)$ and $E(G)$ to denote the vertex set and edge set of it, respectively.
A complete graph with 3 vertices is called a \emph{triangle}. A singleton $\{v\}$ may be denoted as $v$.

\vspace{2mm}
\noindent\textbf{Path decomposition.} We will use the concepts of path decomposition and nice path decomposition of a graph.

\begin{definition}[\cite{cygan2015parameterized}]
    A \textit{path decomposition} of a graph $G$ is a sequence $P = (X_1, X_2, \cdots, X_r)$ of vertex subsets $X_i \subseteq V(G)$ $(i \in \{1, 2, \cdots, r\})$ such that:\\
    \emph{(P1)} $\bigcup_{i = 1}^r X_i = V(G)$.\\
    \emph{(P2)} For every $uv \in E(G)$, there exists $l \in \{1, 2, \cdots, r\}$ such that $X_l$ contains both $u$ and $v$.\\
    \emph{(P3)} For every $u \in V(G)$, if $u \in X_i \cap X_k$ for some $i \leq k$, then $u \in X_j$  for all $i \leq j \leq k$. 
\end{definition}
For a path decomposition $(X_1, X_2, \cdots, X_r)$ of a graph, each vertex subset $X_i$ in it is called a \emph{bag}.
The \emph{width} of the path decomposition is $\max_i \{|X_i|\} - 1$.
The \emph{pathwidth} of a graph $G$, denoted by pw($G$), is the minimum possible width of a path decomposition of $G$.
A path decomposition $(X_1, X_2, \cdots, X_r)$ is \textit{nice} if $X_1 = X_r = \emptyset$ and for every $i\in \{1, 2, \cdots, r - 1\}$ there is either a vertex $v\notin X_i$ such that $X_{i + 1} = X_i\cup \{v\}$, or there is a vertex $w\in X_i$ such that $X_{i + 1} = X_i\setminus \{w\}$.
The following lemma shows that any path decomposition can be turned into a nice path decomposition without increasing the width.

\begin{lemma}[\cite{cygan2015parameterized}]\label{nice}
    If a graph $G$ admits a path decomposition of width at most $p$, then it also admits a nice path decomposition of width at most $p$.
    Moreover, given a path decomposition $P = (X_1, X_2, \cdots, X_r)$ of $G$ of width at most $p$, one can in time $O(p^2\cdot \max(r, |V(G)|))$ compute a nice path decomposition of $G$ of width at most $p$.
\end{lemma}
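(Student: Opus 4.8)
The plan is to reach a nice path decomposition from $P = (X_1,\dots,X_r)$ by a short sequence of local edits, each of which preserves (P1)--(P3) and never increases the width. As a first step I would normalise the ends of the sequence: prepend a bag $X_0 = \emptyset$ and append a bag $X_{r+1} = \emptyset$. An empty bag contains no vertex, so it cannot violate (P2) or (P3), it leaves (P1) intact, and its size $0$ is at most $p+1$ (for $p \ge 0$); hence after this step the width is still at most $p$ and the first and last bags are empty, as required by the definition of niceness.

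The core step makes any two consecutive bags differ by exactly one vertex. Fix a gap between consecutive bags $X$ and $X'$ of the current sequence. I would splice in, strictly between them, the chain that first \emph{forgets} the vertices of $X \setminus X'$ one at a time, passing through intermediate bags that are all subsets of $X$ and arriving at $X \cap X'$, and then \emph{introduces} the vertices of $X' \setminus X$ one at a time, passing through intermediate bags that are all subsets of $X'$ and arriving at $X'$. Every newly inserted bag is a subset of $X$ or of $X'$, so it has size at most $p+1$ and the width does not grow. No bag is deleted, so (P1) still holds and (P2) still holds --- in particular, the bag that witnesses any given edge is still present. The only property that requires an argument is (P3), and it is inherited from (P3) of the original decomposition: a vertex $v \in X \setminus X'$ occurs in no bag lying to the right of $X$ in the original sequence, so forgetting $v$ inside this gap does not create a gap in the interval of bags containing $v$; symmetrically, a vertex $v \in X' \setminus X$ occurs in no bag lying to the left of $X'$, so introducing $v$ here is consistent. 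Carrying this out in every gap, and then deleting any bag that is equal to its predecessor (which changes neither (P1)--(P3) nor the width, and never touches the empty end bags since the first bag has no predecessor), yields a path decomposition of width at most $p$ in which every step introduces or forgets exactly one vertex; this establishes the existence claim.

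For the algorithmic bound, I would count the bags of the output. The gap between $X$ and $X'$ contributes $|X \setminus X'| + |X' \setminus X| \le 2(p+1)$ new bags, so the number of bags after the construction is $O(p \cdot \max(r, |V(G)|))$; each inserted bag differs from a neighbour by a single element and is stored in $O(p)$ space, and computing the sets $X \setminus X'$, $X' \setminus X$, $X \cap X'$ for every gap costs $O(p^2 \cdot \max(r,|V(G)|))$ time in total, giving the stated running time. I do not expect a genuine obstacle here: the only point that needs care is verifying that the forget/introduce interpolation respects (P3), which hinges entirely on the one-sidedness of vertex intervals in the original decomposition, and one must handle the degenerate cases ($X \subseteq X'$ or $X = X'$, and $p = 0$) correctly, but these are routine.
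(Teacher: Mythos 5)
Your construction is correct and is exactly the standard argument for this lemma: the paper itself cites it from Cygan et al.\ without giving a proof, and the textbook proof proceeds precisely by padding with empty end bags and interpolating each gap with a forget chain down to $X \cap X'$ followed by an introduce chain up to $X'$, with (P3) preserved because a vertex in $X \setminus X'$ cannot reappear after $X$ (and symmetrically for $X' \setminus X$). Your bag count and time accounting also match the stated $O(p^2 \cdot \max(r, |V(G)|))$ bound, so there is nothing to fix.
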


There are also easy ways to reduce the length $r$ of a path decomposition to a polynomial of the graph size. Next, we will also assume that $r$ is bounded by a polynomial of the number of vertices.
In terms of the pathwidth, there is a known bound.

\begin{theorem}[\cite{fomin2009two}]\label{pathwidth}
    For any $\epsilon > 0$, there exists an integer $n_\epsilon$ such that for every graph $G$ with $n > n_\epsilon$ vertices,
    \[
        \text{pw}(G) \leq \frac{1}{6}n_3+\frac{1}{3}n_4 + n_{\geq 5} + \epsilon n,
    \]
    where $n_i$ $i\in \{3, 4\}$ is the number of vertices of degree $i$ in $G$ and $n_{\geq 5}$ is the number of vertices of degree at least 5. Moreover, a path decomposition of the corresponding width can be constructed in polynomial time.
\end{theorem}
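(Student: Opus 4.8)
The plan is to reduce the statement to the known pathwidth bound for subcubic graphs and then pay separately for vertices of larger degree. Write $w(G) := \frac{1}{6}n_3 + \frac{1}{3}n_4 + n_{\ge 5}$ for the target quantity. The one external input I would use (beyond what is already in the excerpt) is the Fomin--H\o{}ie bound: every connected graph of maximum degree at most $3$ on $n$ vertices has pathwidth at most $n/6 + o(n)$, with such a decomposition computable in polynomial time. I would first refine this to $\text{pw}(G) \le \frac{1}{6}n_3 + o(n)$ for an arbitrary graph $G$ of maximum degree $\le 3$: vertices of degree $\le 1$ can be inserted into any bag for free, and after deleting them the remaining degree-$\le 2$ vertices lie on a disjoint union of paths and cycles, each of pathwidth $O(1)$; concatenating these component decompositions with the subcubic one, the overhead is absorbed by the $\epsilon n$ slack.

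Next I would remove all vertices of degree at least $5$. If $D$ is this set, then $\text{pw}(G) \le \text{pw}(G - D) + |D|$, since $D$ can be added to every bag of an optimal decomposition of $G - D$. The accounting is favourable because deleting vertices only lowers degrees, the coefficients in $w(\cdot)$ are non-decreasing in the degree, and every vertex of $D$ contributed exactly $1$ to $w(G)$; hence $w(G - D) + |D| \le w(G)$. As $G - D$ has maximum degree at most $4$, it suffices to prove the bound for graphs of maximum degree $\le 4$.

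The core is then to show $\text{pw}(G) \le \frac{1}{6}n_3 + \frac{1}{3}n_4 + o(n)$ whenever $G$ has maximum degree $\le 4$. The tempting shortcut -- delete a small set making $G$ subcubic and re-insert it into every bag -- does not suffice in general: making a $4$-regular graph subcubic forces deleting (essentially) a dominating set, of size at least $n/5$ and strictly more for many graphs, so that paying $1$ per deleted vertex overshoots $n/3$. Instead I would run a Fomin--H\o{}ie style argument in degree-sensitive form. Viewing pathwidth as the vertex separation number, the goal is a linear ordering of $V(G)$ for which the \emph{boundary} of every prefix (placed vertices still having an unplaced neighbour) has size at most $\frac{1}{6}n_3 + \frac{1}{3}n_4 + o(n)$; such an ordering arises from a recursive balanced-separator construction, equivalently from a greedy layout that maintains a boundary potential and charges each degree-$d$ vertex at most $c_d$ with $c_3 = \frac{1}{6}$ and $c_4 = \frac{1}{3}$. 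Getting the coefficient $\frac{1}{3}$ exactly requires a case analysis on the degrees inside the neighbourhood of the chosen degree-$4$ vertex: either the boundary barely grows, or the vertex has low-degree neighbours that can subsequently be cleared cheaply. For $d = 3$ this collapses to the original Fomin--H\o{}ie bound, and the $o(n)$ error of the amortized count is exactly what the $\epsilon n$ term and the threshold $n_\epsilon$ absorb.

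Finally, the ``polynomial-time constructible'' clause is immediate: the subcubic decomposition is built in polynomial time, and all other operations -- deleting a vertex set and re-inserting it into every bag, decomposing a graph component by component, and gluing decompositions along a common empty bag -- are polynomial. I expect the degree-$4$ amortized case analysis (pinning down the coefficient $\frac{1}{3}$) to be the main obstacle; the degree-$\ge 5$ reduction and the subcubic base case are comparatively routine.
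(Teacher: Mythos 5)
You should first note that the paper does not prove this statement at all: Theorem~\ref{pathwidth} is imported verbatim, with its proof, from Fomin, Gaspers, Saurabh and Stepanov~\cite{fomin2009two}, so the only question is whether your sketch would stand on its own — and it does not yet. The decisive quantitative content, namely that degree-$4$ vertices can be charged $\frac{1}{3}$ (and that in the subcubic base case only degree-$3$ vertices need be charged $\frac{1}{6}$), is exactly what you leave open: you assert that a ``greedy layout that maintains a boundary potential and charges each degree-$d$ vertex at most $c_d$'' with $c_4=\frac{1}{3}$ ``arises from a recursive balanced-separator construction,'' but no such ordering or potential argument is given, and you yourself flag it as the main obstacle. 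Worse, the natural amortization behind such a charging scheme genuinely breaks in the regime this paper needs: the accounting $1 \le c_4 + 4(c_4-c_3)$ for ``delete a degree-$4$ vertex, pay $1$, add it to all bags'' only balances when all four neighbours have degree $3$ or $4$, whereas in the proper graphs of Section~\ref{proper-section} every neighbour of a degree-$4$ vertex has degree at most $2$, so the case you wave at (``low-degree neighbours can subsequently be cleared cheaply'') is precisely the one that must be, and is not, worked out. The base-case refinement is also misargued: a degree-$1$ vertex cannot be placed in ``any bag'' (it needs a bag containing its neighbour, e.g.\ a duplicated one), and after removing degree-$\le 1$ vertices the degree-$2$ vertices do not form separate path/cycle components that can be concatenated — they are attached to the degree-$3$ core, so one must suppress them (view $G$ as a subdivision, plus pendant paths, of a multigraph $H$ on the degree-$\ge 3$ vertices, with $\mathrm{pw}(G)\le \mathrm{pw}(H)+O(1)$) and apply the subcubic bound to $H$.

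For comparison, once the refined subcubic bound $\mathrm{pw}\le \frac{1}{6}n_3+\epsilon n$ is in hand, there is a much shorter route to the coefficient $\frac{1}{3}$ than re-engineering the Fomin--H{\o}ie machinery in degree-sensitive form: split every degree-$4$ vertex $v$ into two adjacent vertices $v_1,v_2$, each inheriting two of $v$'s neighbours. The resulting graph $G^*$ is subcubic, has $n_3+2n_4$ vertices of degree $3$ and at most $2n$ vertices in total, and $G$ is a minor of $G^*$ (contract the split edges), so by minor-monotonicity $\mathrm{pw}(G)\le \mathrm{pw}(G^*)\le \frac{1}{6}n_3+\frac{1}{3}n_4+O(\epsilon n)$; your deletion step for degree-$\ge 5$ vertices (which is correct as written) then gives the stated bound, and all constructions are polynomial. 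Your outer reductions are fine and your observation that the ``delete a set dominating the degree-$4$ vertices'' shortcut overshoots is correct, but as it stands the heart of the theorem is asserted rather than proved, so the proposal has a genuine gap.
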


\noindent\textbf{Branch-and-Search Algorithm.} For a branch-and-search algorithm,
we use a parameter $k$ of the instance to measure the running time of the algorithm. Let $T(k)$ denote the maximum size of the search tree generated by the algorithm when running on an instance with the parameter no greater than $k$.
Assume that a branching operation generates $l$ branches and the measure $k$ in the $i$-th instance decreases by at least $c_i$. This operation generates a recurrence relation
\[
    T(k)\leq T(k-c_1)+T(k-c_2)+...+T(k-c_l)+1.
\]

The largest root of the function $f(x)=1-\sum_{i=1}^lx^{-c_i}$ is called the \textit{branching factor} of the recurrence.
Let $\gamma$ denote the maximum branching factor among all branching factors in the search tree.
The running time of the algorithm is bounded by $O^*(\gamma^k)$. For more details about analyzing branch-and-search algorithms, please refer to~\cite{kratsch2010exact}.
\section{A Parameterized Algorithm for \textsc{Co-Path/Cycle Packing}}\label{cpcp-section}

In this section, we propose a parameterized algorithm for \textsc{Co-Path/Cycle Packing}.
First, in Section \ref*{proper-section}, we show that \textsc{Co-Path/Cycle Packing} on a special graph class, called \emph{proper graph}, can be quickly solved by using the dynamic programming algorithm based on path decompositions in Theorem \ref*{dp}.
The key point in this section is to bound the pathwidth of proper graphs by $2k/3 + \epsilon k$ for any $\epsilon > 0$.
Second, in Section \ref*{cpcp-algorithm-section}, we give a branch-and-search algorithm that will implement some branching steps on special local graph structures. When there is no good structure to apply our branching rules, we show that the graph must be a proper graph and then the algorithm in Section \ref*{proper-section} can be called directly to solve the problem.

\subsection{Proper Graphs with Small Pathwidth}\label{proper-section}
A graph is called \textit{proper} if it satisfies the following conditions:
\begin{enumerate}
    \item The maximum degree of $G$ is at most 4.
    \item For any degree-4 vertex $v$, all neighbors are of degree at most 2.
    \item For any degree-2 vertex $v$, at least one vertex in $N(v)$ is of degree at least 3. 
    \item Each connected component contains at least 6 vertices.
\end{enumerate}

We are going to solve \textsc{Co-Path/Cycle Packing} on proper graphs first. Next, we try to bound the pathwidth of proper graphs.

\begin{lemma}\label{pw-exist}
    Let $G$ be a proper graph. If $G$ has a $cPCP$-set (resp. $cPP$-set) of size at most $k$, then it holds that
    \begin{equation}|V(G)|\leq 100k    ~~~\mbox{and}~~~        \frac{n_3}{6}+\frac{n_4}{3} \leq \frac{2k}{3},\end{equation}
    where $n_3$ and $n_4$ are the number of degree-3 and degree-4 vertices in $G$, respectively.
\end{lemma}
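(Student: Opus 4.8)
The plan is to prove the two inequalities in \eqref{eq:pw-bound} separately, both by exploiting the structural constraints defining a proper graph together with the fact that a $cPCP$-set (resp.\ $cPP$-set) $S$ of size at most $k$ makes $G\setminus S$ a graph of maximum degree at most $2$.

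First, for the bound $|V(G)|\leq 100k$. Fix a $cPCP$-set $S$ with $|S|\le k$. The graph $H=G\setminus S$ has maximum degree at most $2$, so every connected component of $H$ is a path or a cycle; in particular $H$ has no degree-$3$ or degree-$4$ vertices. Hence every vertex of $G$ of degree $3$ or $4$ either lies in $S$ or is adjacent to a vertex of $S$. The key is to bound $|N[S]|$: since every vertex in $S$ has degree at most $4$ in $G$, we have $|N[S]|\le 5|S|\le 5k$. Thus $G$ has at most $5k$ vertices of degree $\ge 3$. Now I invoke Conditions~2 and~3 of a proper graph to control the remaining low-degree vertices. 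By Condition~3, every degree-$2$ vertex has a neighbor of degree $\ge 3$, so the degree-$2$ vertices all lie in $N(\{\text{vertices of degree}\ge 3\})$; since each degree-$\ge 3$ vertex has degree at most $4$, this set has size at most $4\cdot 5k=20k$. It remains to bound the vertices of degree $0$ and $1$: by Condition~4 each connected component has at least $6$ vertices, and a component consisting only of degree-$\le 1$ vertices would be a single edge or isolated vertex (at most $2$ vertices), a contradiction — so every component contains a vertex of degree $\ge 2$, hence (chasing neighborhoods within the component, using that the component has a vertex of degree $\ge 3$ whenever it has more than a bounded number of vertices, or directly bounding component sizes) the number of degree-$\le 1$ vertices is also linearly bounded. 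Summing these crude bounds gives $|V(G)|\le 100k$ with room to spare; the constant $100$ is deliberately loose so none of these estimates needs to be tight.

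Second, for the bound $\tfrac{n_3}{6}+\tfrac{n_4}{3}\le \tfrac{2k}{3}$, equivalently $n_3+2n_4\le 4k$. Again let $S$ be a $cPCP$-set with $|S|\le k$ and $H=G\setminus S$. As observed, $H$ has maximum degree $\le 2$. Partition the degree-$3$ and degree-$4$ vertices of $G$ into those in $S$ and those in $V(H)$. For $v\in V(H)$ with $d_G(v)\in\{3,4\}$, at least $d_G(v)-2$ of its edges go to $S$; in particular each degree-$3$ vertex of $G$ lying outside $S$ sends at least one edge into $S$, and each degree-$4$ vertex outside $S$ sends at least two edges into $S$. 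Now I count edges incident to $S$ from two sides. On one hand, $\sum_{s\in S} d_G(s)\le 4|S|\le 4k$ bounds the total number of edge-endpoints at $S$. On the other hand, using Condition~2 (every degree-$4$ vertex has all neighbors of degree $\le 2$, hence no degree-$4$ vertex of $G$ is adjacent to another degree-$3$-or-$4$ vertex) I can argue that a degree-$4$ vertex of $G$ must itself lie in $S$: if a degree-$4$ vertex $v$ were outside $S$, it would need $\ge 2$ neighbors in $S$, but then… — more carefully, I will show that the "weight" $n_3+2n_4$, where each degree-$3$ vertex outside $S$ contributes an edge to $S$ and each degree-$4$ vertex outside $S$ contributes two edges to $S$ and each degree-$3$ or degree-$4$ vertex inside $S$ is simply counted, is at most the number of edges incident to $S$ plus $|S|$, i.e.\ at most $4k$. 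Combining, $n_3+2n_4\le 4k$, which rearranges to the claimed inequality.

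The main obstacle I anticipate is making the second inequality tight enough: the coefficient $2k/3$ (not, say, $k$) is exactly what later yields the pathwidth bound $2k/3+\epsilon k$ via Theorem~\ref{pathwidth}, so the counting argument cannot be as lossy as in the first part. The delicate point is correctly accounting for degree-$3$/$4$ vertices that lie in $S$ versus those outside, and using Condition~2 to ensure degree-$4$ vertices do not get "double-charged" through shared neighbors. I expect the cleanest route is a direct edge-counting / discharging argument: assign to each vertex of degree $3$ or $4$ a charge equal to the number of its edges that must enter $S$ (which is $d_G(v)-2$ if $v\notin S$, and I will show essentially $d_G(v)-2$ suffices when $v\in S$ as well after a small adjustment), observe each edge into $S$ is charged at most twice, and bound the number of such edges by $\sum_{s\in S}d_G(s)\le 4k$. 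The first inequality, by contrast, is robust and only needs crude bounds.
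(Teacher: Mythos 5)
The crucial second inequality is where your proposal breaks down. Your accounting — charge each degree-3 vertex outside $S$ one edge into $S$, each degree-4 vertex outside $S$ two edges into $S$, and "simply count" the degree-3/4 vertices inside $S$ — gives at best $n_3+2n_4\leq (\text{edges incident to }S)+2\,|S\cap V_4|+|S\cap V_3|$, and with your only quantitative bounds ($\sum_{s\in S}d_G(s)\leq 4|S|\leq 4k$ and $|S|\leq k$) this yields $n_3+2n_4\leq 5k$, i.e.\ $\frac{n_3}{6}+\frac{n_4}{3}\leq \frac{5k}{6}$, not the required $\frac{2k}{3}$; your "i.e.\ at most $4k$" is asserted, not derived. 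The missing idea is the precise way Condition 2 enters: a degree-4 vertex of $S$ has \emph{no} neighbors of degree $\geq 3$, so it contributes nothing to the edges between $S$ and the degree-$\geq 3$ vertices of $V\setminus S$. Writing $x$ for the number of degree-4 vertices in $S$ and $A_3,A_4$ for the degree-3/4 vertices outside $S$, the paper bounds those edges from below by $|A_3|+2|A_4|$ and from above by $3(k-x)$ (each non-degree-4 vertex of $S$ has degree at most 3), so $|A_3|+2|A_4|\leq 3(k-x)$, and then $n_3+2n_4\leq (k-x+|A_3|)+2(x+|A_4|)\leq k+x+3(k-x)=4k-2x\leq 4k$: the $3x$ saved in the edge budget pays for the $2x$ that the degree-4 vertices of $S$ contribute directly. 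Without isolating $x$ in this way the count cannot reach $4k$. Relatedly, your intermediate suggestion that "a degree-4 vertex of $G$ must itself lie in $S$" is false: a degree-4 vertex may stay outside $S$ provided at least two of its (degree-$\leq 2$) neighbors are deleted, which Condition 2 does not forbid.

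The first inequality is indeed robust, but as literally written your estimates do not sum to $100k$: from $|N[S]|\leq 5k$ you get at most $5k$ vertices of degree $\geq 3$, hence $n_2\leq 20k$, and the generic bound $n_1\leq 4(n_2+n_3+n_4)$ then gives $n_1\leq 100k$, for a total of $125k$. You need either the sharper $n_3+n_4\leq 4k$ (which the paper gets as a by-product of the same edge count, yielding $n_2\leq 16k$, $n_1\leq 80k$, total exactly $100k$), or an extra observation such as: by Condition 3 a degree-2 vertex has at most one degree-1 neighbor, so $n_1\leq n_2+4(n_3+n_4)$, which brings your chain down to $65k$. Also make the degree-$\leq 1$ step explicit: Condition 4 rules out components of size $\leq 2$, so every degree-1 vertex has a neighbor of degree $\geq 2$ and $n_0=0$; your parenthetical "chasing neighborhoods" remark leaves this, and the final summation, unproven.
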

\begin{proof}

    \begin{figure}[!t]
        \centering
        \includegraphics[scale=0.4]{./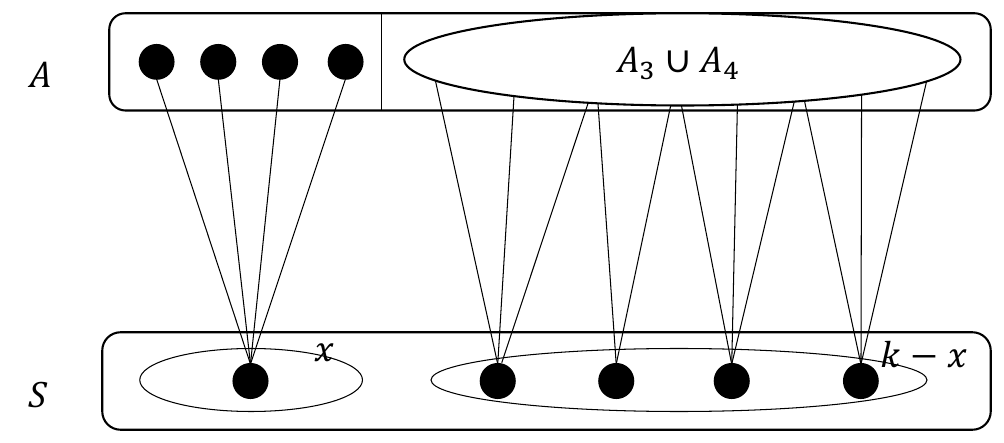}
        \caption{Sets $S$, $A$, $A_3$ and $A_4$ in the proof of Lemma \ref*{pw-exist}.}
        \label{SmallPW}
    \end{figure}

    Assume that there is a cPCP-set (resp. cPP-set) $S$ of size at most $k$. Let $A = V(G) \setminus S$.
    Let $V_3$  and $V_4$ be the set of degree-3 and degree-4 vertices in $G$, respectively. Let $A_3 = A \cap V_3$ and $A_4 = A \cap V_4$.
    Let $x$ be the number of degree-4 vertices in $S$. See Figure \ref*{SmallPW} for an illustration.

    Since the degree of any vertex in $G[A]$ is at most 2, the number of edges between $A_3\cup A_4$ and $S$ is at least $|A_3| + 2|A_4|$.
    Since $G$ is proper, for any degree-4 vertex $v$, there is no degree-$\geq 3$ vertex in $N(v)$.
    So we have that the number of edges between $A_3\cup A_4$ and $S$ is at most $3(k - x)$.
    So we have that
    \begin{equation}
        3(k - x) \geq |A_3| + 2|A_4|. %
    \end{equation}

    Let $n_i$ be the number of degree-$i$ vertices in $G$ for any $i\in \{0, 1, 2, 3, 4\}$.
    By the fourth condition of the proper graph, we have that $n_0 = 0$.
    We also have that $n_3 \leq k - x + |A_3|$, $n_4 = |A_4| + x$.
    Inequality (2) implies that
    \begin{equation}
        n_3 + n_4 \leq k + |A_3| + |A_4| \leq k + |A_3| + 2|A_4| \leq 4k. %
    \end{equation}

    By the third condition of the proper graph, 
    we have that
    \begin{equation}
        n_2 \leq 4(n_3 + n_4) \leq 16k. %
    \end{equation}

    For degree-1 vertices, also by the fourth condition of the proper graph, we have that
    \begin{equation}
        n_1 \leq 4(n_2 + n_3 + n_4) \leq 4(16k + 4k) = 80k. %
    \end{equation}

    Inequalities (3), (4) and (5) together imply that
    \begin{equation}
        |V(G)| = n_0 + n_1 + n_2 + n_3 + n_4 \leq 100k. %
    \end{equation}

    Since $x \geq 0$, inequality (2) implies that
    \[
        \frac{n_3}{6}+\frac{n_4}{3} \leq \frac{k - x + |A_3| }{6}+\frac{|A_4| + x}{3}= \frac{k + x + |A_3| + 2|A_4| }{6}\leq \frac{4k - 2x}{6}\leq \frac{2k}{3}. %
    \]
\end{proof}

\begin{lemma}\label{get-pw}
    Let $G$ be a proper graph. For any $\epsilon > 0$, in polynomial time we can either decide that $G$ has no cPCP-set (resp. cPP-set) of size at most $k$ or compute a path decomposition of width at most $\frac{2k}{3} + \epsilon k$.
\end{lemma}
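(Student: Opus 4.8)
The plan is to combine the two structural bounds of Lemma~\ref*{pw-exist} with the pathwidth estimate of Theorem~\ref*{pathwidth}; no genuinely new idea is needed beyond a careful choice of the constant in the $\epsilon$-term. Fix $\epsilon>0$, set $\epsilon':=\epsilon/100$, and let $n_{\epsilon'}$ be the threshold supplied by Theorem~\ref*{pathwidth}. First I would compute $|V(G)|$ together with the numbers $n_3$ and $n_4$ of degree-$3$ and degree-$4$ vertices (the graph is given, so this is polynomial). If either $|V(G)|>100k$ or $\tfrac{n_3}{6}+\tfrac{n_4}{3}>\tfrac{2k}{3}$, then the contrapositive of Lemma~\ref*{pw-exist} tells us that $G$ has no cPCP-set (resp.\ cPP-set) of size at most $k$, and the algorithm reports this and halts.

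Otherwise both $|V(G)|\le 100k$ and $\tfrac{n_3}{6}+\tfrac{n_4}{3}\le\tfrac{2k}{3}$ hold, and I would hand $G$ and the parameter $\epsilon'$ to Theorem~\ref*{pathwidth}. Since $G$ is proper its maximum degree is at most $4$, so $n_{\ge 5}=0$, and the theorem produces in polynomial time a path decomposition of $G$ of width at most $\tfrac{1}{6}n_3+\tfrac{1}{3}n_4+\epsilon'\,|V(G)|$. Substituting $\tfrac16 n_3+\tfrac13 n_4\le\tfrac{2k}{3}$, $|V(G)|\le 100k$, and $\epsilon'=\epsilon/100$ bounds this width by $\tfrac{2k}{3}+\epsilon k$, which is exactly the decomposition the lemma asks for. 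I would stress that this branch never needs to know whether $G$ is actually a yes-instance: returning any valid path decomposition of width at most $\tfrac{2k}{3}+\epsilon k$ is a permissible outcome, and the instance itself is later settled by the dynamic programming routine of Theorem~\ref*{dp}. All the tests above, and the subroutine of Theorem~\ref*{pathwidth}, run in polynomial time, so the whole procedure does.

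The only step that needs a word of care --- the one I would flag as the main subtlety rather than an obstacle --- is the regime $|V(G)|\le n_{\epsilon'}$ that Theorem~\ref*{pathwidth} does not cover. There $G$ has size bounded by a constant depending only on $\epsilon$, so $\mathrm{pw}(G)$ is a constant and an optimal path decomposition of width at most $n_{\epsilon'}-1$ can be computed by brute force in constant time; this width is at most $\tfrac{2k}{3}+\epsilon k$ once $k$ exceeds the constant $(n_{\epsilon'}-1)/(\tfrac23+\epsilon)$. For the remaining constantly many small values of $k$ the bound $\tfrac{2k}{3}+\epsilon k$ may itself fall below the (constant) pathwidth, so the lemma is genuinely intended, and is only invoked, for $k$ above that constant threshold; for smaller $k$ the entire \textsc{Co-Path/Cycle Packing} (resp.\ \textsc{Co-Path Packing}) instance is solvable in polynomial time outright and is handled separately in the overall algorithm. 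Everything else is a direct substitution into Lemma~\ref*{pw-exist} and Theorem~\ref*{pathwidth}.
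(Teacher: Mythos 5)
Your proposal is correct and follows essentially the same route as the paper: test the two bounds from Lemma~\ref{pw-exist}, reject if either fails, and otherwise invoke Theorem~\ref{pathwidth} with $\epsilon' = \epsilon/100$ (using $n_{\geq 5}=0$ for proper graphs) to get width at most $\frac{2k}{3}+\epsilon k$. Your extra remark about the finitely many graphs below the threshold $n_{\epsilon'}$ is a harmless technicality the paper silently absorbs into Theorem~\ref{pathwidth}, and your handling of it is fine.
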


\begin{proof}

    Our algorithm is defined as follows.
    Let $n_3$ and $n_4$ be the number of degree-3 and degree-4 vertices in $G$, respectively.
    First, we check whether $|V(G)|\leq 100k$ and $\frac{n_3}{6}+\frac{n_4}{3} \leq \frac{2k}{3}$ in polynomial time.
    If $|V(G)| > 100k$ or $\frac{n_3}{6}+\frac{n_4}{3} > \frac{2k}{3}$, by Lemma \ref*{pw-exist}, we can decide that $G$ has no cPCP-set (resp. cPP-set) of size at most $k$.
    Otherwise, let $\epsilon' = \epsilon / 100$. By Theorem \ref*{pathwidth}, we can obtain a path decomposition $P'$ of width at most $p' = \frac{n_3}{6}+\frac{n_4}{3} + \epsilon'|V(G)|\leq \frac{2k}{3} + 100\epsilon' k = \frac{2k}{3} + \epsilon k$.
    This lemma holds.
\end{proof}


The following theorem shows that there exists an algorithm for the general \textsc{$d$-Bounded-Degree Vertex Deletion} problem based on a given path decomposition of the graph.
The running time bound of the algorithm is $O^*((d + 2)^p)$, where $p$ is the width of the given path decomposition.
Previously, an $O^*(3^p)$-time algorithm for \textsc{$1$-Bounded-Degree Vertex Deletion} was known~\cite{chang2016fixed}. 
Recently, this result was extended for any $d\geq1$ by van Rooij~\cite{van2021generic}.
We also present it (in the appendix) in our way to make this paper self-contained. 

\begin{theorem}[$\clubsuit$]\label{dp}
    Given a path decomposition of $G$ with width $p$. For any $d\geq 1$, \textsc{$d$-Bounded-Degree Vertex Deletion} can be solved in $O^*((d+2)^p)$ time and space.
\end{theorem}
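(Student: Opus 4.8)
The plan is to prove Theorem~\ref{dp} by a standard dynamic programming over a nice path decomposition, where the novelty is only in the choice of table indices that keeps the state space down to $(d+2)^p$ per bag. First I would invoke Lemma~\ref{nice} to assume the given path decomposition $P=(X_1,\dots,X_r)$ is nice, so that consecutive bags differ by a single introduce or forget operation and $X_1=X_r=\emptyset$; this costs only a polynomial factor and does not increase the width. For each bag $X_i$ I would define a table indexed by a function $f\colon X_i\to\{\text{out}\}\cup\{0,1,\dots,d\}$, where $f(v)=\text{out}$ records that $v$ is placed in the deletion set $S$, and $f(v)=j\in\{0,\dots,d\}$ records that $v$ is kept and that exactly $j$ of its neighbours \emph{already introduced and processed} (i.e. lying in bags up to $X_i$) are also kept. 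There are $d+2$ choices per vertex, hence at most $(d+2)^{|X_i|}\le (d+2)^{p+1}$ states per bag. The table value $\mathrm{DP}_i[f]$ is the minimum size of a partial deletion set consistent with $f$ on the graph induced by all vertices appearing in bags $X_1,\dots,X_i$, subject to every already-forgotten kept vertex having final degree at most $d$.

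The key steps are then the transitions, which I would carry out case by case. For a \emph{leaf} bag ($X_1=\emptyset$) the only state is the empty function with value $0$. For an \emph{introduce} node $X_{i+1}=X_i\cup\{v\}$: we extend each state $f$ of $X_i$ by a choice for $v$. If we set $v$ to $\text{out}$, the value increases by $1$ and the counters of $v$'s neighbours in $X_i$ are unchanged; if we keep $v$ with some counter value, we must check consistency — namely, the new counter we assign to $v$ must equal the number of neighbours $u\in X_i$ with $f(u)\ne\text{out}$ (since by property~(P3) of the path decomposition, every neighbour of $v$ already introduced must currently be in the bag), and for each such kept neighbour $u$ we increment its counter by one, rejecting the state if any counter would exceed $d$. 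For a \emph{forget} node $X_{i+1}=X_i\setminus\{w\}$: we drop $w$ from the domain, but before doing so, if $f(w)\ne\text{out}$ we must verify that $w$'s counter is its final kept-degree (all its neighbours have been introduced, again by~(P3)) and is at most $d$; we then take, for each surviving state $f'$ on $X_{i+1}$, the minimum over all $f$ restricting to $f'$. The answer is $\mathrm{DP}_r[\emptyset]$, compared against $k$. Correctness follows from an induction on $i$ showing $\mathrm{DP}_i[f]$ equals the claimed optimum; the crucial invariant is that property~(P3) guarantees all of a vertex's neighbours are simultaneously present in some bag, so the degree of a kept vertex is fully accounted for by the time it is forgotten.

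For the running time: each bag has $O((d+2)^p)$ states (absorbing the $+1$ in the exponent into the $O^*$ notation), each introduce/forget transition is computed in time polynomial in $p$ per target state, and $r$ is polynomial in $|V(G)|$ by the remark preceding Theorem~\ref{pathwidth}. Hence the total time and space are $O^*((d+2)^p)$ as claimed.

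The main obstacle — really the only subtle point — is arguing that the ``counter'' encoding is both sound and complete: we must be sure that at the moment a kept vertex $w$ is forgotten, \emph{all} of its neighbours have already been introduced (so the counter truly reflects $w$'s final degree in $G\setminus S$), and conversely that no legitimate partial solution is excluded by the consistency checks at introduce nodes. Both directions reduce to careful application of path-decomposition property~(P3), together with~(P2) to ensure every edge is witnessed inside some bag; I would state this as a short invariant lemma and prove it by induction on the bag index before verifying the transitions.
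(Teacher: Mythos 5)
Your proposal is correct and follows essentially the same route as the paper's proof: a dynamic program over a nice path decomposition whose states assign each bag vertex either ``deleted'' or a kept-degree counter in $\{0,\dots,d\}$ (the paper encodes this as a partition $(D,R_0,\dots,R_d)$ of the bag), with the same introduce/forget transitions and the same $(d+2)^{p+1}\cdot\mathrm{poly}$ accounting. The only cosmetic difference is that you propagate counters forward at introduce nodes while the paper expresses the same update by shifting the sets $R_l$, so no substantive comparison is needed.
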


However, the algorithm given in Theorem \ref*{dp} cannot be used to solve \textsc{Co-Path Packing} since there is a global connectivity constraint for \textsc{Co-Path Packing}. 
We will discuss it in Section \ref*{cpp-section}.
Based on Theorem \ref*{dp}, we have the following lemma.

\begin{lemma}\label{solve-proper}
    \textsc{Co-Path/Cycle Packing} on proper graphs can be solved in $O^*(2.5199^{k})$ time.
\end{lemma}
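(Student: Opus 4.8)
The plan is to derive the lemma directly from the two results just established, with essentially no new work. First I would note that \textsc{Co-Path/Cycle Packing} is exactly \textsc{$2$-Bounded-Degree Vertex Deletion}: a graph is a disjoint union of induced paths and cycles if and only if its maximum degree is at most $2$, so a cPCP-set of $G$ is the same thing as a solution to \textsc{$2$-Bounded-Degree Vertex Deletion} on $G$. Hence Theorem \ref*{dp} with $d = 2$ provides an algorithm that, given a path decomposition of width $p$, solves \textsc{Co-Path/Cycle Packing} in $O^*(4^p)$ time and space.

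Next I would fix a constant $\epsilon > 0$ (to be pinned down below) and run the polynomial-time procedure of Lemma \ref*{get-pw} on the proper graph $G$. If it reports that $G$ has no cPCP-set of size at most $k$, the algorithm answers ``no''. Otherwise it returns a path decomposition of $G$ of width $p \le \frac{2k}{3} + \epsilon k$, which I would feed to the $d = 2$ instance of Theorem \ref*{dp}; this decides the instance in time
\[
O^*\!\left(4^{\,2k/3 + \epsilon k}\right) = O^*\!\left(\bigl(4^{2/3}\bigr)^{k}\cdot 4^{\epsilon k}\right).
\]
Correctness is immediate: by Lemma \ref*{get-pw} the first case occurs only on infeasible instances, and in the second case Theorem \ref*{dp} solves the instance exactly.

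Finally I would tune $\epsilon$. Since $4^{2/3} = 2^{4/3} = 2.519842\ldots < 2.5199$, there is a constant $\epsilon > 0$ small enough that $4^{2/3}\cdot 4^{\epsilon} < 2.5199$; fixing such an $\epsilon$ at the outset makes the total running time $O^*(2.5199^{k})$. I do not expect a genuine obstacle in this lemma: all the substance lies in Lemma \ref*{get-pw} (which itself rests on Theorem \ref*{pathwidth} and Lemma \ref*{pw-exist}) and in Theorem \ref*{dp}. The only points worth double-checking are the identification of the problem with \textsc{$2$-Bounded-Degree Vertex Deletion} and the corner case of graphs too small for Theorem \ref*{pathwidth} to apply, but the latter is already subsumed by the polynomial-time guarantee of Lemma \ref*{get-pw} (for such $G$ the size, hence $k$, is bounded by a constant and the instance is solvable by brute force).
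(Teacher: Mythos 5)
Your proposal is correct and follows essentially the same route as the paper: invoke Lemma \ref{get-pw} to either reject or obtain a path decomposition of width at most $\frac{2k}{3}+\epsilon k$, then run the Theorem \ref{dp} dynamic program with $d=2$, choosing $\epsilon$ small enough that $4^{2/3+\epsilon}<2.5199$. The explicit identification of \textsc{Co-Path/Cycle Packing} with \textsc{$2$-Bounded-Degree Vertex Deletion} and the remark about tuning $\epsilon$ are exactly what the paper does (it picks $\epsilon<10^{-6}$).
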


\begin{proof}
    We first call the algorithm in Lemma \ref*{get-pw}.
    If the algorithm decides that $G$ has no $cPCP$-set of size at most $k$, we claim that $(G, k)$ is a no-instance.
    Otherwise, we can obtain a nice path decomposition of width at most $\frac{2k}{3} + \epsilon k$.
    Then we call the algorithm in Theorem \ref*{dp}. This algorithm runs in $O^*(4^{2k/3+\epsilon k}) = O^*(2.5199^{k})$, where we choose $\epsilon < 10^{-6}$. 
    This lemma holds.
\end{proof}





\subsection{A Branch-and-Search Algorithm}\label{cpcp-algorithm-section}
In this subsection, we provide a branch-and-search algorithm for \textsc{Co-Path/Cycle Packing}, which is denoted by ${\tt cPCP}(G, k)$.
Our algorithm contains
several reduction and branching steps. After recursively executing these steps, we will get a proper graph and then call the dynamic programming algorithm in Section \ref*{proper-section} to solve it.

\subsubsection{Reduction and Branching Rules}

Firstly we have a reduction rule to reduce small connected components.
\begin{rrule}\label{rrule-1}
    If there is a connected component $C$ of the graph such that $|V(C)|\leq 6$, then run a brute force algorithm to find a minimum cPCP-set $S$ in $C$, delete $C$ and include $S$ in the deletion set.
\end{rrule}

\begin{lemma}\label{reduction-2-lemma}
    Let $u$ and $v$ be two adjacent vertices of degree at most 2 in $G$ and $G'$ be the graph after deleting edge $uv$ from $G$.
    Then $(G, k)$ is a yes-instance if and only if $(G', k)$ is a yes-instance.
\end{lemma}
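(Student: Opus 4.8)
The plan is to show both directions of the equivalence, noting that deleting an edge can never hurt, so the forward direction is immediate, and concentrating on the reverse direction where we must argue that any solution for $G'$ is still a solution for $G$.

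\textbf{Forward direction.} Suppose $(G,k)$ is a yes-instance with cPCP-set $S$ of size at most $k$. Since $G' \setminus S$ is a subgraph of $G \setminus S$ on the same vertex set, the maximum degree of $G' \setminus S$ is at most the maximum degree of $G \setminus S$, which is at most $2$. Hence $S$ is also a cPCP-set of $G'$, and $(G',k)$ is a yes-instance.

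\textbf{Reverse direction.} Suppose $(G',k)$ is a yes-instance with cPCP-set $S'$ of size at most $k$. I would like to claim $S'$ is already a cPCP-set of $G$, but this is not quite true: adding back the edge $uv$ could push $u$ or $v$ to degree $3$ in $G \setminus S'$ if neither is in $S'$. So the argument is: if $u \in S'$ or $v \in S'$, then $G \setminus S'$ and $G' \setminus S'$ are the same graph (the edge $uv$ has an endpoint removed in both), so $S'$ works for $G$ directly. Otherwise $u, v \notin S'$; then in $G' \setminus S'$ both $u$ and $v$ have degree at most $2$ (since $S'$ is a cPCP-set of $G'$), but in fact their degrees in $G' \setminus S'$ are at most $1$, because $u$ and $v$ each had degree at most $2$ in $G$ and one of their incident edges, namely $uv$, is absent in $G'$. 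Adding $uv$ back raises each of their degrees by exactly one, to at most $2$, and changes no other vertex's degree. Therefore the maximum degree of $G \setminus S'$ is still at most $2$, so $S'$ is a cPCP-set of $G$ and $(G,k)$ is a yes-instance.

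\textbf{Main obstacle.} There is no real obstacle here; the only subtlety is the case distinction on whether an endpoint of $uv$ lies in the solution set, and the key observation that because $u$ and $v$ already have degree at most $2$ in $G$, removing the edge $uv$ leaves them with at most one other incident edge, so re-inserting $uv$ cannot create a degree-$3$ vertex. I would also remark that the analogous statement for cPP-sets requires a tiny bit more care, since a cPP-set must leave a disjoint union of paths (no cycles): removing $uv$ could turn a cycle through $uv$ into a path, but since $G' \setminus S'$ is already a disjoint union of paths and adding $uv$ back merely joins two path endpoints or closes one path into a cycle — and the latter cannot happen precisely because it would require $u$ and $v$ to be the two endpoints of a single path, contradicting that a path's endpoints have degree at most $1$ while here re-adding $uv$ would give them degree... — actually this case does need the extra hypothesis, so for the cPP version one argues directly from the structure; I would flag that the lemma as stated concerns cPCP-sets where only the degree bound matters, keeping the proof clean.
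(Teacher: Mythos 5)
Your proof is correct and is essentially the paper's argument: the paper justifies the lemma in one line by noting that re-adding an edge between two vertices that have degree at most $1$ in $G'$ cannot raise their degrees above $2$, which is exactly the key observation in your reverse direction (your case split on whether $u$ or $v$ lies in $S'$ and your closing remark about the cPP variant are harmless elaborations).
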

This lemma holds because adding an edge between two vertices of degree at most 1 back to a graph will not make the two vertices of degree greater than 2.
Based on this lemma, we have the following reduction rule.

\begin{rrule}\label{rrule-2}
    If there are two adjacent vertices $u$ and $v$ of degree at most 2, then
    return ${\tt cPCP}(G' = (V(G), E(G) \setminus \{uv\}), k)$.
\end{rrule}

\begin{lemma}\label{reduction-3-lemma}
    Let $\{u, v, w\}$ be a triangle such that $|N(\{u, v, w\})| = 1$. Let $x$ be the vertex in $N(\{u, v, w\})$.
    There is a minimum cPCP-set containing $x$.
\end{lemma}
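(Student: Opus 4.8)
The plan is to show that the vertex $x \in N(\{u,v,w\})$ can always be assumed to be in a minimum $\mathrm{cPCP}$-set by an exchange argument. First I would fix an arbitrary minimum $\mathrm{cPCP}$-set $S$ and suppose for contradiction that $x \notin S$. Since $\{u,v,w\}$ is a triangle in which each vertex has its only outside neighbour equal to $x$ (because $|N(\{u,v,w\})| = 1$), inside $G$ the vertices $u$, $v$, $w$ have degrees $2$, $2$, $2$ if they are not adjacent to $x$ and degree $3$ if they are; in any case the triangle $G[\{u,v,w\}]$ by itself is a cycle, and any vertex among $u,v,w$ adjacent to $x$ has degree $3$ in $G$, so it must be killed unless $x$ is killed.

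The key case distinction is on how many of $u,v,w$ lie in $S$. If $x \notin S$, then every vertex of $\{u,v,w\}$ that is adjacent to $x$ must be in $S$ (its degree in $G \setminus S$ would otherwise be $3$, since both triangle-neighbours plus $x$ survive), and moreover at least one vertex of $\{u,v,w\}$ must be in $S$ regardless: if all three of $u,v,w$ survived together with $x$, then any of them adjacent to $x$ has degree $3$, and if none is adjacent to $x$ the component $\{u,v,w\}$ is a triangle disjoint from $x$ — fine on its own, but then $|V(C)| = 3$ for that component, which should have been removed by Reduction-Rule~\ref*{rrule-1}; so we may assume at least one of $u,v,w$ is adjacent to $x$ and hence $S \cap \{u,v,w\} \neq \emptyset$. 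Pick any $t \in S \cap \{u,v,w\}$. I then claim $S' = (S \setminus \{t\}) \cup \{x\}$ is also a $\mathrm{cPCP}$-set of the same size. The only vertices whose degree in the deleted graph could increase are the neighbours of $t$, namely the two other triangle vertices and possibly $x$; but $x \in S'$, and each remaining triangle vertex $t'$ now has at most its one triangle-neighbour inside $\{u,v,w\}$ surviving (since $x$ is deleted), so $\deg_{G \setminus S'}(t') \le \deg_{G\setminus S}(t') + 1 \le 2$ after accounting for $t$ re-entering and $x$ leaving — more carefully, $t'$ loses the neighbour $x$ and gains the neighbour $t$, a net change of zero, so its degree is unchanged and still at most $2$. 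For every other vertex, removing $x$ from the graph only decreases degrees, so no new violation is created. Hence $S'$ is a valid $\mathrm{cPCP}$-set with $|S'| \le |S|$, and it contains $x$.

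The remaining point is just to verify that $|S'| = |S|$ rather than $|S'| < |S|$ would contradict minimality; since $|S'| \le |S|$ and $S$ is minimum, equality holds, so $S'$ is itself a minimum $\mathrm{cPCP}$-set containing $x$, proving the lemma. I expect the main obstacle to be the bookkeeping in the degree argument: one must be careful that after swapping $t$ out and $x$ in, the two surviving triangle vertices have their degree exactly preserved (they trade the neighbour $x$ for the neighbour $t$), and that no third vertex is affected because $x$'s only other neighbours, if any, only drop in degree when $x$ is deleted. A secondary subtlety is ensuring the degenerate configurations (e.g. none of $u,v,w$ adjacent to $x$, or $x$ having degree so low that it is itself in a tiny component) are ruled out by the preceding reduction rules, in particular Reduction-Rule~\ref*{rrule-1}, so that the exchange is always available.
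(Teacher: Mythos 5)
Your proposal is correct and uses essentially the same exchange argument as the paper: because every neighbour of $\{u,v,w\}$ outside the triangle is $x$, at least one vertex of $\{u,v,w,x\}$ must be deleted, and a deleted triangle vertex can be traded for $x$ without pushing any degree above $2$ (the swapped-in triangle vertex and its triangle neighbours have all their non-triangle edges going to $x$, which is now deleted, and no vertex outside $\{u,v,w,x\}$ is touched). One aside: your intermediate claim that \emph{every} triangle vertex adjacent to $x$ must lie in $S$ is not true in general (one of its triangle neighbours could be deleted instead), but you never rely on it --- the correct observation that $S\cap\{u,v,w\}\neq\emptyset$ whenever $x\notin S$ is all that is needed, and the appeal to Reduction-Rule~1 is superfluous since $|N(\{u,v,w\})|=1$ already forces some triangle vertex to be adjacent to $x$.
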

This lemma holds because we must delete at least one vertex in $\{u, v, w, x\}$ and delete any vertex in $\{u, v, w\}$ cannot decrease the degree of vertices in $V(G)\setminus \{u, v, w, x\}$.
Based on this lemma, we have the following reduction rule.

\begin{rrule}\label{rrule-3}
    If there is a triangle $\{u, v, w\}$ such that $|N(\{u, v, w\})| = 1$, then
    return ${\tt cPCP}(G\setminus N[\{u, v, w\}], k - 1)$.
\end{rrule}

After applying the three simple reduction rules, we will execute some branching steps.
Although we have several branching steps, most of them are based on the following two branching rules.

For a vertex $v$ of degree at least 3, either it is included in the deletion set or it remains in the graph.
For the latter case, there are at most two vertices in $N(v)$ that can remain in the graph.
So we have the following branching rule.

\vspace{2mm}
\noindent\textbf{Branching-Rule (B1).}
\textit{
    For a vertex $v$ of degree at least 3, branch on it to generate ${|N(v)|\choose 2} + 1$ branches by
    either (i) deleting $v$ from the graph and including it in the deletion set,
    or (ii) for every pair of vertices $u$ and  $w$ in $N(v)$,
    deleting $N(v)\setminus \{u, w\}$ from the graph and including $N(v)\setminus \{u, w\}$ in the deletion set.
}
\vspace{2mm}

A vertex $v$ \textit{dominates} a vertex $u$ if $N[u]\subseteq N[v]$. We have the following property for dominated vertices.

\begin{lemma}\label{domination}
    If a vertex $v$ dominates a vertex $u$, then there is a minimum cPCP-set either containing $v$ or containing none of $v$ and $u$.
\end{lemma}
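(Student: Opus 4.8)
The plan is a standard local-exchange argument. Take any minimum cPCP-set $S$. If $v \in S$, or if neither $v$ nor $u$ lies in $S$, there is nothing to prove, so the only case to handle is $u \in S$ and $v \notin S$. In that case I would form $S' = (S \setminus \{u\}) \cup \{v\}$, which satisfies $|S'| = |S|$ and contains $v$; it then suffices to show that $S'$ is again a cPCP-set, since it is automatically minimum.

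To verify that $G \setminus S'$ has maximum degree at most $2$, I would first extract the two facts implied by $N[u] \subseteq N[v]$: because $u \ne v$ we have $u \in N(v)$, and $N(u) \setminus \{v\} \subseteq N(v)$. Observe that $V \setminus S' = \left( (V \setminus S) \setminus \{v\} \right) \cup \{u\}$, so passing from $G \setminus S$ to $G \setminus S'$ only deletes $v$ and re-inserts $u$; hence the only vertices whose degree could possibly increase are $u$ and the neighbours of $u$. For $u$ itself, its neighbourhood in $G \setminus S'$ equals $(N(u) \setminus \{v\}) \cap (V \setminus S)$, which by the second fact is contained in $N(v) \cap (V \setminus S)$, a set of size at most $2$ because $v \notin S$ and $S$ is a cPCP-set. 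For a vertex $w \in N(u)$ with $w \notin S'$ (necessarily $w \ne v$), the second fact gives $w \in N(v)$, so in moving from $G \setminus S$ to $G \setminus S'$ the vertex $w$ loses exactly the neighbour $v$ and gains exactly the neighbour $u$; its degree is therefore unchanged and hence still at most $2$. Every remaining vertex of $G \setminus S'$ already had degree at most $2$ in $G \setminus S$ and can only have lost neighbours, so it is fine as well. Thus $S'$ is a cPCP-set, which completes the argument.

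The proof is mostly bookkeeping, and I expect the only delicate point to be the handling of $u$ itself: when $u$ is put back into the graph one must be sure its degree does not exceed $2$, and this is exactly where genuine domination ($N[u] \subseteq N[v]$), rather than mere adjacency of $u$ and $v$, is needed — together with the companion observation that re-inserting $u$ is ``paid for'' at each neighbour $w$ of $u$ by the simultaneous removal of $v$, which is also adjacent to $w$. I would also remark that the same exchange goes through verbatim for cPP-sets if an analogous domination lemma is needed there, since deleting $v$ and inserting the dominated vertex $u$ cannot create a new vertex of degree $\geq 3$ and hence cannot break the ``disjoint paths'' structure either.
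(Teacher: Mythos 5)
Your proof is correct and takes essentially the same route as the paper, which simply swaps $u$ for $v$ (forming $S' = (S\setminus\{u\})\cup\{v\}$) and asserts feasibility from domination; you have merely spelled out the degree bookkeeping that the paper leaves implicit. (A tiny caveat on your closing aside only: for cPP-sets the degree bound alone does not exclude creating a cycle — one also needs that the at most two neighbours of $u$ in $G\setminus S'$ lie in different components of $(G\setminus S)-v$, which holds because $G\setminus S$ is acyclic — but this does not affect the lemma as stated.)
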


\begin{proof}
    Let $S$ be a cPCP-set. Assume to the contrary that $S\cap \{v,u\}=\{u\}$.
    Since $v$ dominates $u$, we know that $S' = S\setminus \{u\} \cup \{v\}$ is still a feasible cPCP-set with $|S'|= |S|$.  There is a minimum cPCP-set containing $v$.     This lemma holds.
\end{proof}

Assume that $v$ dominates $u$. By Lemma~\ref{domination}, we know that either $v$ is included in the deletion set or both of $v$ and $u$ remain in the graph.
For the latter case, there are at most two vertices in $N(v)$ that can remain in the graph. Thus, at least $|N(v)| - 2$ vertices in $N(v)\setminus \{u\}$ will be deleted. We get the following branching rule.

\vspace{2mm}
\noindent\textbf{Branching-Rule (B2).}
\textit{
    Assume that a vertex $v$ of degree at least 3 dominates a vertex $u$.
    Branch on $v$ to generate $1+ (|N(v)|-1)=|N(v)|$ branches by
    either (i) deleting $v$ from the graph and including it in the deletion set,
    or (ii) for each vertex $w\in N(v)\setminus \{u\}$,
    deleting $N(v)\setminus \{u, w\}$ from the graph and including $N(v)\setminus \{u, w\}$ in the deletion set.
}


\subsubsection{Steps}

When we execute one step, we assume that all previous steps are not applicable in the current graph anymore.
We will analyze each step after describing it.

\vspace{2mm}
\noindent\textbf{Step 1} (Vertices of degree at least 5).
    If there is a vertex $v$ of $d(v) \geq 5$, then branch on $v$ with Branching-Rule (B1) to generate ${d(v)\choose d(v) - 2} + 1$ branches and return the best of
\[
    \begin{split}
        & {\tt cPCP}(G \setminus \{v\},k - 1) \\
        \mbox{and} \quad & {\tt cPCP}(G \setminus (N(v)\setminus \{u, w\}),k - |N(v)\setminus \{u, w\}|) \mbox{~for each pair $\{u, w\}\subseteq N(v)$}.
    \end{split}
\]

For this step, we get a recurrence
\[
    T (k) \leq T (k - 1) + {d(v)\choose d(v) - 2} \times T (k - (d(v) - 2)) + 1,
\]
where $d(v) \geq 5$.
For the worst case that $d(v) = 5$, the branching factor of it is 2.5445. 
\vspace{2mm}

After Step 1, the graph contains only vertices with degree at most 4.

\vspace{2mm}
\noindent\textbf{Step 2} (Degree-4 vertices dominating some vertex of degree at least 3).
    Assume that there is a degree-4 vertex $v$ that dominates a vertex $u_1$, where $d(u_1)\geq 3$. Without loss of generality, we assume that the other three neighbors of $v$ are $u_2, u_3$ and $u_4$, and $u_1$ is adjacent to $u_2$ and $u_3$ ($u_1$ is further adjacent to $u_4$ if $d(u_1)=4$). See Figure \ref*{Step2} for an illustration. We branch on $v$ with Branching-Rule (B2) and get $|N(v)|$ branches
    $${\tt cPCP}(G \setminus \{v\},k - 1),  {\tt cPCP}(G \setminus \{u_2, u_3\}, k - 2), $$
$${\tt cPCP}(G \setminus \{u_2, u_4\}, k - 2), ~\mbox{and}~ {\tt cPCP}(G \setminus \{u_3, u_4\}, k - 2).$$

    \begin{figure}[!t]
        \centering
        \includegraphics[scale=0.4]{./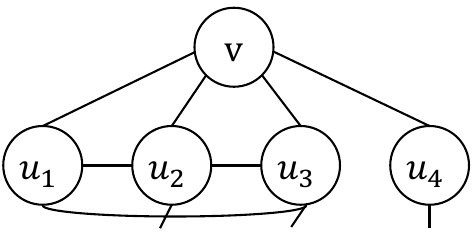}
        \caption{Degree-4 vertex $v$ dominates a degree-3 vertex $u_1$.}
        \label{Step2}
    \end{figure}

The corresponding recurrence is
\[
    T (k) \leq T (k - 1) + 3 \times T (k - 2) + 1,
\]
which has a branching factor of 2.3028.
\vspace{2mm}

A triangle $\{u, v, w\}$ is called \textit{heavy} if it holds that $|N(\{u, v, w\})|\geq 4$.

\vspace{2mm}
\noindent\textbf{Step 3} (Degree-4 vertices in a heavy triangle).
    Assume that a degree-4 vertex $v$ is in a heavy triangle.
    Let $u_1$, $u_2$, $u_3$ and $u_4$ be the four neighbors of $v$, where we assume without loss of generality that $\{v, u_1, u_2\}$ is a heavy triangle. See Figure \ref*{Step3} for an illustration. 
    We branch on $v$ with Branching-Rule (B1).
    In the branch of deleting $\{u_3,u_4\}$, we can simply assume that the three vertices $v, u_1$ and $u_2$ are not deleted, otherwise this branch can be covered by another branch and then it can be ignored.
    Since $v, u_1$ and $u_2$ form a triangle, we need to delete all the vertices in $N(\{v, u_1, u_2\})$ in this branch. Note that $\{v, u_1, v_2\}$ is a heavy triangle and we have that $|N(\{v, u_1, u_2\})|\geq 4$.
    We generate the following ${4\choose 2}$ + 1 branches
    \[
        \begin{split}
            & {\tt cPCP}(G \setminus \{v\},k - 1), \\
            & {\tt cPCP}(G \setminus (\{u_1, u_i\}),k - |\{u_1, u_i\}|) \mbox{~for each $i = 2, 3, 4$},\\
            & {\tt cPCP}(G \setminus (\{u_2, u_i\}),k - |\{u_2, u_i\}|) \mbox{~for each $i = 3, 4$},\\
            \mbox{and} \quad & {\tt cPCP}(G \setminus N(\{v, u_1, u_2\}),k - |N(\{v, u_1, u_2\})|).
        \end{split}
    \]

    \begin{figure}[!t]
        \centering
        \includegraphics[scale=0.4]{./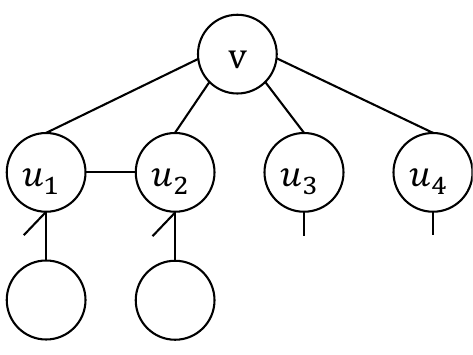}
        \caption{Degree-4 vertex $v$ is in a heavy triangle $\{v, u_1, u_2\}$.}
        \label{Step3}
    \end{figure}

The corresponding recurrence is
\[
    T (k) \leq T (k - 1) + ({4\choose 2} - 1) \times T (k - 2) + T(k - |N(\{v, u_1, u_2\})|) + 1,
\]
where $|N(\{v, u_1, u_2\})|\geq 4$.
For the worst case that $|N(\{v, u_1, u_2\})| = 4$, the branching factor of it is 2.8186.

\begin{lemma}\label{degree-1_lemma}
    If a vertex $v$ has two degree-1 neighbors $u_1$ and $u_2$, then there is a minimum cPCP-set containing none of $u_1$ and $u_2$.
    Furthermore, If $v$ is a vertex of degree at most 3, then there is a minimum cPCP-set containing none of $v$, $u_1$, and $u_2$.
\end{lemma}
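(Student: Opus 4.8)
The plan is to prove both parts of Lemma~\ref{degree-1_lemma} by an exchange argument on a fixed minimum cPCP-set, in the style of Lemma~\ref{domination} and Lemma~\ref{reduction-3-lemma}. Let $S$ be a minimum cPCP-set of $G$, i.e. $\deg_{G\setminus S}(x)\le 2$ for every $x\notin S$ and $|S|$ is smallest possible. I want to argue that I can always modify $S$ so that it avoids $u_1$ and $u_2$ (and, under the extra hypothesis, also $v$) without increasing its size.

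For the first statement, suppose $S$ contains at least one of $u_1,u_2$; say $u_1\in S$. Since $u_1$ has degree $1$ in $G$, its only neighbour is $v$. I would form $S' = (S\setminus\{u_1,u_2\})\cup\{v\}$ and check that $S'$ is still a cPCP-set. The point is that removing $u_1$ (and possibly $u_2$) from the deletion set can only raise the degree of $v$ back up, and putting $v$ itself into the deletion set kills $v$ entirely; no other vertex's degree in $G\setminus S'$ exceeds its degree in $G\setminus S$, because $u_1,u_2$ are leaves and touch nothing except $v$. Hence $G\setminus S'$ still has maximum degree at most $2$. Also $|S'|\le |S|$ since we add one vertex and remove at least one. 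By minimality $|S'|=|S|$, so $S'$ is a minimum cPCP-set containing neither $u_1$ nor $u_2$. A small case split is needed depending on whether $S$ already contained $v$ and on whether it contained one or both of $u_1,u_2$, but all cases reduce to the same observation.

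For the second statement, assume additionally $\deg_G(v)\le 3$, and start from a minimum cPCP-set $S$ that already avoids $u_1,u_2$ (which exists by the first part). If $v\notin S$ there is nothing to prove, so suppose $v\in S$. In $G$ the neighbours of $v$ are $u_1,u_2$ and at most one further vertex $w$. I would set $S'' = (S\setminus\{v\})\cup\{w\}$ if such a $w$ exists, or simply $S'' = S\setminus\{v\}$ if $\deg_G(v)=2$. Putting $v$ back into the graph gives it degree at most $3$ in $G$, but two of those neighbours ($u_1,u_2$) are leaves; deleting $w$ (the only possible degree-$\ge 1$-contributing third neighbour) brings $v$'s degree in $G\setminus S''$ down to at most $2$. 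For every other vertex the degree in $G\setminus S''$ is no larger than in $G\setminus S$ since the only vertex we reinserted, $v$, now has $\le 2$ neighbours present. And $|S''|\le|S|$. Again minimality forces equality, giving a minimum cPCP-set containing none of $v,u_1,u_2$.

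The only real subtlety is bookkeeping in the degree count when several of $v,u_1,u_2,w$ lie in $S$ simultaneously, so that the swap both adds and deletes members; I expect the main obstacle to be making sure no other vertex $y\notin S''$ gets pushed above degree $2$ — but this is immediate because every vertex affected by the swap is adjacent only to $v$ (for the leaves $u_1,u_2$) or is $v$ itself or $w$, and $w$ is only being \emph{deleted}, which never raises anyone's degree. I would write this up as two short paragraphs mirroring the two sentences of the statement, with the case analysis compressed to a single line each.
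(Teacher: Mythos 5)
Your proof is correct and follows essentially the same exchange argument as the paper: swap $u_1,u_2$ out of the solution in favour of $v$, and (when $d(v)\le 3$) swap $v$ out in favour of its third neighbour. The only cosmetic differences are that you compose the two swaps sequentially while the paper does the degree-3 case in one step via $(S\setminus\{v,u_1,u_2,u_3\})\cup\{u_3\}$, and in your degree-2 subcase minimality actually yields a contradiction (so $v\notin S$ to begin with) rather than the claimed equality — but the conclusion is unaffected.
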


\begin{proof}
    Let $S$ be a minimum cPCP-set. If $v\in S$, then $S\setminus \{u_1,u_2\}$ is still a minimum cPCP-set.
    Next, we assume that $v\not\in S$. If at least one of $u_1$ and $u_2$ is in $S$, then $S'=(S\setminus \{u_1,u_2\})\cup\{v\}$ is a cPCP-set with $|S'|\leq |S|$.
    Thus, $S'$ is a minimum cPCP-set not containing $u_1$ and $u_2$.

    If $v$ is a degree-2 vertex, then the component containing $v$ is a path of three vertices. None of the three vertices should be deleted.
    If $v$ is a degree-3 vertex, we let $u_3$ be the third neighbor of $v$. Note that at least one vertex in $\{v,u_1,u_2,u_3\}$ should be deleted and then any solution $S$ will contain at least one vertex in $\{v,u_1,u_2,u_3\}$. We can see that $S'=(S\setminus  \{v,u_1,u_2,u_3\})\cup \{u_3\}$ is still a cPCP-set
    with size $|S'|\leq |S|$. There is always a minimum cPCP-set containing none of $v$, $u_1$, and $u_2$.
\end{proof}

\vspace{2mm}
\noindent\textbf{Step 4} (Degree-4 vertices in a triangle).
    Assume that there is still a degree-4 vertex $v$ in a triangle $\{v, u_1,u_2\}$. We also let $u_1$, $u_2$, $u_3$ and $u_4$ be the four neighbors of $v$.
    Since triangle $\{v, u_1,u_2\}$ can not be a heavy triangle now, we know that $|N(\{v, u_1, u_2\})| \leq 3$.
    First, we show that it is impossible $|N(\{v, u_1, u_2\})| =2$. Assume to the contrary that $|N(\{v, u_1, u_2\})| =2$. Then $u_1$ and $u_2$  can only be adjacent to vertices in $N[v]$.
    If both of $u_1$ and $u_2$ are degree-2 vertices, then Reduction-Rule 1 should be applied. If one of $u_1$ and $u_2$ is a vertex of degree at least 3, then $v$ would dominate this vertex, and then the condition of Step 2 would hold. Any case is impossible.
    Next, we assume that $|N(\{v, u_1, u_2\})| = 3$ and let $u_5$ denote the third vertex in $N(\{v, u_1, u_2\})$. We further consider several different cases.

    Case 1: One of $u_1$ and $u_2$, say $u_1$ is a degree-4 vertex. For this case, vertex $u_1$ is adjacent to $u_5$, otherwise the degree-4 vertex $v$ would dominate the degree-4 vertex $u_1$.
     Since $u_1$ is of degree 4, we know that $u_1$ is also adjacent to one of $u_3$ and $u_4$, say $u_3$. Vertices $u_2$ and $u_3$ can only be adjacent to vertices in $N[v]\cup \{u_5\}$, otherwise $\{v,u_1,u_2\}$ or $\{v,u_1,u_3\}$ would form a heavy triangle and Step 3 should be applied. Thus, neither $u_2$ nor $u_3$ can be a degree-3 vertex, otherwise degree-4 vertex $u_1$  or $v$ dominates a degree-3 vertex $u_2$ or $u_3$ and then Step 2 should be applied.
        Thus, we have that either $d(u_2) = d(u_3) = 2$ or $d(u_2) = d(u_3) = 4$.
    We further consider the following three cases:

    Case 1.1: $d(u_2) = d(u_3) = 2$.
    For this case, we have that $v$ dominates $u_2$ and $u_3$.
    By Lemma \ref*{domination}, we have that there is a minimum cPCP-set either containing $v$ or containing none of $v$, $u_2$ and $u_3$.
    For the first branch, we delete $v$ from the graph and include it in the deletion set.
    For the second branch, we delete $u_1$ and $u_4$  from the graph and include it in the deletion set.
    The corresponding recurrence is
    \[
        T (k) \leq T (k - 1) + T (k - 2) + 1,
    \]
    the branching factor of which is 1.6181.

    Case 1.2: $d(u_2) = d(u_3) = 4$ and $u_2$ and $u_3$ are not adjacent.  For this case, both of $u_2$ and $u_3$ are adjacent to $u_4$ and $u_5$.
    Since $v$ does not dominate $u_4$, we know that $u_4$ is adjacent to at least one vertex out of $N[v]$.
    Since triangle $\{v, u_2,u_4\}$ is not a heavy triangle, we know that $u_4$ is not adjacent to any vertex other than $N[v]\cup \{u_5\}$.
    Thus, $u_4$ is also adjacent to $u_5$. Since the maximum degree of the graph is 4 now, we know that this component only contains six degree-4 vertices $N[v]\cup \{u_5\}$, which should be eliminated by Reduction-Rule 2.

    Case 1.3: $d(u_2) = d(u_3) = 4$ and $u_2$ and $u_3$ are adjacent. For this case, since $v$ does not dominate $u_3$, we know that $u_3$ is adjacent to at least one vertex out of $N[v]$. This vertex can only be $u_5$. 
    Thus, the four neighbors of $u_3$ are $v, u_1$, $u_2$, and $u_5$.
    Now $u_1$ is a degree-4 vertex dominating a degree-4 vertex $u_3$. Step 2 should be applied. Thus, this case is impossible.

    \begin{figure}[!t]
        \centering
        \includegraphics[scale=0.4]{./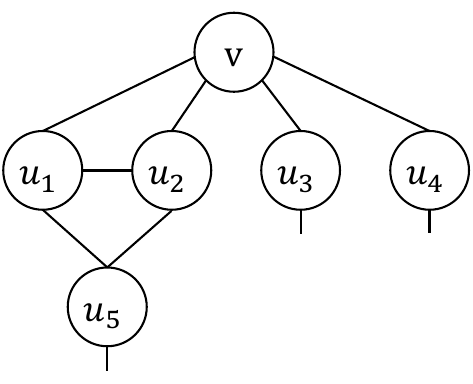}
        \caption{In the Case 2 of Step 4, degree-4 vertex $v$ is in a triangle $\{v, u_1, u_2\}$ and both of $u_1$ and $u_2$ are degree-3 vertices.}
        \label{Step4}
    \end{figure}
    Case 2:  Both of $u_1$ and $u_2$ are degree-3 vertices.
    It holds that $N(u_1) = \{v, u_2, u_5\}$ and $N(u_2) = \{v, u_1, u_5\}$, otherwise vertex $v$ would dominate $u_1$ or $u_2$.
    See Figure \ref*{Step4} for an illustration.
    For this case, we first branch on $v$ with Branching-Rule (B1) to generate ${d(v)\choose d(v) - 2} + 1$ branches.
    We can only get a recurrence relation
    \[
        T (k) \leq T (k - 1) + {4 \choose 2} T (k - 2) + 1.
    \]

    This recurrence is not good enough. Next, we look at the branch of deleting $v$ and try to get some improvements on this subbranch. After deleting $v$, vertices $u_1$ and $u_2$ become degree-2 vertices in a triangle.
    We first apply Reduction-Rule 1 to delete edge $u_1u_2$ between $u_1$ and $u_2$. Then, vertices $u_1$ and $u_2$ become degree-1 vertices adjacent to $u_5$.

    Case 2.1: $u_5$ is a degree-2 vertex. This case is impossible otherwise Reduction Rule \ref*{rrule-3} would be applied before this step.

    Case 2.2: $u_5$ is a degree-3 vertex. By Lemma \ref*{degree-1_lemma}, we can delete $N[u_5]$ directly and include the third neighbor of $u_5$ in the deletion set.
    We generate the following ${4\choose 2}$ + 1 branches
    \[
        \begin{split}
            & {\tt cPCP}(G \setminus (\{v\}\cup N[u_5]), k - 2) \\
            \mbox{and} \quad & {\tt cPCP}(G \setminus (N(v)\setminus \{u, w\}),k - |N(v)\setminus \{u, w\}|) \mbox{~for each pair $\{u, w\}\subseteq N(v)$}.
        \end{split}
    \]

    The corresponding recurrence is
    \[
        T (k) \leq T (k - 2) + {4 \choose 2} T (k - 2) + 1.
    \]
    which has a branching factor of 2.6458.

    Case 2.3: $u_5$ is a degree-4 vertex. By Lemma \ref*{degree-1_lemma}, we know that there is a minimum cPCP-set either containing $u_5$ or containing none of $u_5, u_1$, and $u_2$.
    For the first case, we delete $u_5$ and include it in the deletion set. For the second case, we delete $N[u_5]$ from the graph and include $N(u_5)\setminus \{u_5,u_1,u_2\}$ in the deletion set.
    Note that $|N(u_5)\setminus \{u_5,u_1,u_2\}|=2$ since $u_5$ is a degree-4 vertex.
    Combining with
    the previous branching on $v$, we get the following ${4\choose 2}$ + 2 branches
    \[
        \begin{split}
            & {\tt cPCP}(G \setminus \{v, u_5\}, k - 2), \\
            & {\tt cPCP}(G \setminus (\{v\}\cup N[u_5]), k - 3), \\
            \mbox{and} \quad & {\tt cPCP}(G \setminus (N(v)\setminus \{u, w\}),k - |N(v)\setminus \{u, w\}|) \mbox{~for each pair $\{u, w\}\subseteq N(v)$}.
        \end{split}
    \]

    The corresponding recurrence is
    \[
        T (k) \leq T (k - 1 - 1) +T(k - 1 - 2) + {4 \choose 2} T (k - 2) + 1,
    \]
  which has a branching factor of 2.7145.


After Step 4, no degree-4 vertex is in a triangle.

\vspace{2mm}
\noindent\textbf{Step 5} (Degree-4 vertices adjacent to some vertex of degree at least 3).
    Assume there is a degree-4 vertex $v$ adjacent to at least one vertex of degree at least 3.
    Let $u_1$, $u_2$, $u_3$ and $u_4$ be the four neighbors of $v$, where we assume without loss of generality that $d(u_1) \geq 3$.
    First, we branch on $v$ by either (b1) including it in the solution set or (b2) excluding it from the solution set. Next, we focus on the latter case (b2).
    In (b2), we further branch on $u_1$ by (b2.1) either including it in the solution set or (b2.2) excluding it from the solution set.
    For case (b2.1), there are at least $|N(v)|-2=2$ vertices in $N(v)$ that should be deleted by the same argument for Branching-Rule (B1). Thus, we can generate three subbranches by deleting $\{u_1, u_2\}$, $\{u_1, u_3\}$, and $\{u_1, u_3\}$, respectively.
    For case (b2.2),  there are still at least $|N(v)|-2=2$ vertices in $N(v)$ that should be deleted, which can be one of the following three sets $\{u_2, u_3\}$, $\{u_2, u_4\}$, and $\{u_3, u_4\}$.
    Furthermore, there are also at least $|N(u_1)| - 2\geq 1$ vertices in $N(u_1)$ that should be deleted. Note that $v\in N(u_1)$ is not allowed to be deleted now. Thus, at least $|N(u_1)\setminus\{v\}|-1$ vertices in $N(u_1)\setminus\{v\}$ should be deleted.
    We will generate ${{|N(u_1)\setminus\{v\}|}\choose {|N(u_1)\setminus\{v\}|-1}}=|N(u_1)\setminus\{v\}|=d(u_1)-1$ branches by decreasing $k$ by $|N(u_1)\setminus\{v\}|-1=d(u_1)-2$.
    Since after Step 3, the degree-4 vertex $v$ is not in any triangle, we know that $N(u_1)\setminus\{v\}$ is disjoint with $\{u_2,u_3,u_4\}$. For case (b2.2), we will generate $3\times (d(u_1)-1)$ subbranches by decreasing $k$ by at least $2+d(u_1)-2=d(u_1)$ in each, where $d(u_1)=3$ or 4.
    See Figure \ref*{Step5} for an illustration.
    In total, we will generate the following 1+3+$3\times (d(u_1)-1)$ branches
\[
    \begin{split}
        & {\tt cPCP}(G \setminus \{v\},k - 1), \\
        & {\tt cPCP}(G \setminus (\{u_1, u_i\}),k - 2) \mbox{~for each $i = 2, 3, 4$},\\
        & {\tt cPCP}(G \setminus (\{u_2, u_3\} \cup (N(u_1)\setminus \{v, w\})),k - d(u_1)),~\mbox{for each $w \in N(u_1)\setminus \{v\}$},\\
        & {\tt cPCP}(G \setminus (\{u_2, u_4\} \cup (N(u_1)\setminus \{v, w\})),k - d(u_1)),~\mbox{for each $w \in N(u_1)\setminus \{v\}$},\\
        \mbox{and}~ & {\tt cPCP}(G \setminus (\{u_3, u_4\} \cup (N(u_1)\setminus \{v, w\})),k - d(u_1)),~\mbox{for each $w \in N(u_1)\setminus \{v\}$}.
    \end{split}
\]

\begin{figure}[!t]
    \centering
    \includegraphics[scale=0.28]{./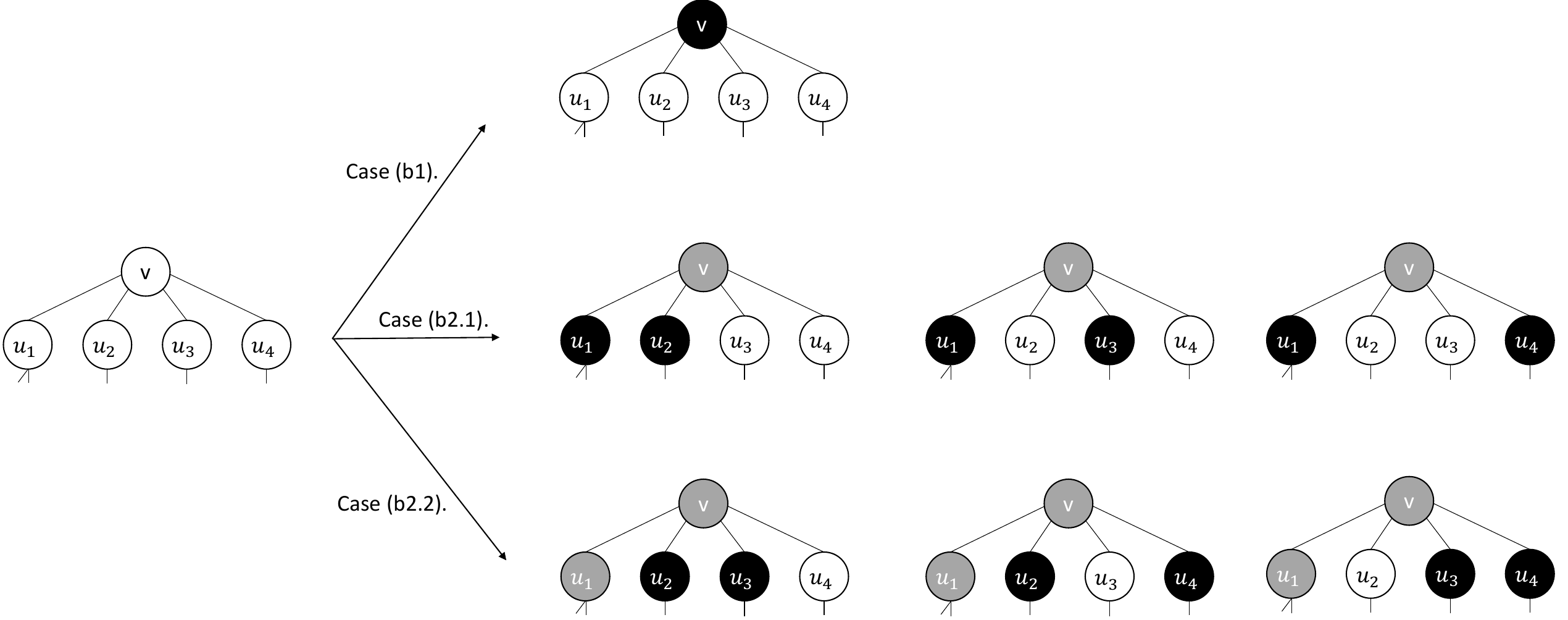}
    \caption{Vertices in the deletion set are denoted by black vertices, and vertices not allowed to be deleted are denoted by grey vertices.}
    \label{Step5}
\end{figure}

We get a recurrence
\[
    T (k) \leq T (k - 1) + 3 \times T (k - 2) + 3(d(u_1)-1) \times T(k - d(u_1)) + 1,
\]
where $d(u_1)= 3$ or 4.
For the case that $d(u_1) = 3$, the branching factor is 2.8192, and for the case that $d(u_1) = 4$, the branching factor is 2.6328.
\vspace{2mm}


The worst branching factors in the above five steps are listed in Table 2.
After Step 5, any degree-4 vertex can be only adjacent to vertices of degree at most 2. It is easy to see that the remaining graph after Step 5 is a proper graph.
We call the algorithm in Lemma \ref*{solve-proper} to solve the instance in $O^*(2.5199^{k})$ time.

\begin{table}[!t]\label{tb-result-L}
    \begin{center}
    \caption{The branching factors of each of the first five steps}
    \begin{tabular}{l|c|c|c|c|c}
       \hline
       Steps & Step 1 & Step 2 & Step 3 & Step 4 & Step 5 \\
       \hline
       Branching factors & 2.5445 & 2.3028 & 2.8186 & 2.7145 & \textbf{2.8192}\\
       \hline
    \end{tabular}
    \end{center}
    \vspace{-6mm}
\end{table}

\begin{theorem}\label{para-theorem}
    \textsc{Co-Path/Cycle Packing} can be solved in $O^*(2.8192^k)$ time.
\end{theorem}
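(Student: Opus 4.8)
The plan is to assemble the pieces that have already been set up in the excerpt. The algorithm ${\tt cPCP}(G,k)$ first exhaustively applies Reduction-Rules~\ref{rrule-1}, \ref{rrule-2}, \ref{rrule-3}, each of which runs in polynomial time and does not increase $k$; their correctness follows from Lemmas~\ref{reduction-2-lemma}, \ref{reduction-3-lemma} (and the brute-force argument for the small components). Then it tries Steps~1--5 in order, each guarded by the assumption that the previous steps no longer apply. First I would argue that each Step is correct: Step~1 and the branchings inside Steps~3--5 are instances of Branching-Rule~(B1), whose soundness is the trivial observation that a degree-$\geq 3$ vertex left in the graph keeps at most two of its neighbours; Step~2 and Case~1.1 of Step~4 use Branching-Rule~(B2), justified by Lemma~\ref{domination}; the improved subbranches in Step~4 use Lemma~\ref{degree-1_lemma}. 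I would also check the structural claims establishing that certain cases are impossible (these are the case analyses already written out in Steps~2--5, invoking that earlier steps do not apply, e.g.\ no heavy triangle through a degree-4 vertex, no degree-4 vertex dominating a degree-$\geq 3$ vertex).

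Next I would verify that when none of Steps~1--5 applies the current graph $G'$ is \emph{proper}. Condition~1 (max degree $\leq 4$) holds after Step~1. Condition~2 (every neighbour of a degree-4 vertex has degree $\leq 2$) is exactly the negation of the trigger of Step~5, since Steps~3--4 have already removed degree-4 vertices from all triangles and Step~5 removes every degree-4 vertex adjacent to a degree-$\geq 3$ vertex. Condition~3 (every degree-2 vertex has a degree-$\geq 3$ neighbour) follows because Reduction-Rule~\ref{rrule-2} is inapplicable, so no two adjacent vertices both have degree $\leq 2$. Condition~4 (every component has $\geq 6$ vertices) follows from Reduction-Rule~\ref{rrule-1}. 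Hence Lemma~\ref{solve-proper} applies, solving the proper instance in $O^*(2.5199^k)$ time.

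For the running time I would collect the recurrences stated after each Step and take the worst branching factor. Step~1 gives $2.5445$ (worst at $d(v)=5$; larger degrees only help because ${d\choose 2}x^{-(d-2)}\to 0$), Step~2 gives $2.3028$, Step~3 gives $2.8186$ (worst at $|N(\{v,u_1,u_2\})|=4$), Step~4 gives at most $2.7145$ over its subcases, and Step~5 gives $2.8192$ (worst at $d(u_1)=3$). Taking $\gamma=\max\{2.5445,2.3028,2.8186,2.7145,2.8192\}=2.8192$, the search tree has size $O^*(2.8192^k)$, each node does polynomial work, and each leaf invokes the $O^*(2.5199^k)$ proper-graph routine; since $2.5199<2.8192$, the total time is $O^*(2.8192^k)$. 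I would state the space bound $O^*(2.5199^k)$ as inherited from Lemma~\ref{solve-proper} (the branch-and-search part uses polynomial space).

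The main obstacle is not any single deep idea but the bookkeeping: one must be careful that each Step's case distinction is genuinely exhaustive and that every ``this case is impossible'' appeal is to a rule or step that truly precedes it, so that the exhaustive-reduction invariant is maintained; in particular Step~4's subcases (1.1--1.3, 2.1--2.3) and Step~5's degree-dependent branch count need the precise structural facts (no heavy triangle, no domination of a degree-$\geq 3$ vertex, degree-4 vertices triangle-free) established earlier. Verifying the branching-factor computations — especially confirming that Step~5 with $d(u_1)=3$ is the true bottleneck at $2.8192$ and that Step~3's $2.8186$ does not exceed it — is the other place where care is required, but each is a routine root-finding check for $f(x)=1-\sum_i x^{-c_i}$.
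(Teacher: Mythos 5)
Your proposal is correct and follows essentially the same route as the paper: exhaustively apply Reduction-Rules 1--3 and Steps 1--5, observe that the residual graph is proper so Lemma~\ref{solve-proper} finishes in $O^*(2.5199^k)$ time, and bound the search tree by the worst branching factor $2.8192$ from Step~5 with $d(u_1)=3$. The verification details you flag (exhaustiveness of Step~4's cases, the ordering of ``impossible case'' appeals, and the branching-factor computations) are exactly the content the paper supplies in its step-by-step analysis, so no gap remains.
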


Fomin et al.~\cite{fomin2019exact} introduced the monotone local search technique for deriving exact exponential-time algorithms from parameterized algorithms.
Under some conditions, a $c^k n^{O(1)}$-time algorithm implies an algorithm with running time $(2-1/c)^{n+o(n)}$.
By applying this technique on our $O^*(2.8192^k)$-time algorithm in Theorem \ref*{para-theorem}, we know that \textsc{Co-Path/Cycle Packing} can be solved in $(2-1/2.8192)^{n+o(n)} = O(1.6453^n)$ time.
\begin{corollary}
    \textsc{Co-Path/Cycle Packing} can be solved in $O(1.6453^n)$ time.
\end{corollary}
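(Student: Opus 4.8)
The plan is to obtain this corollary as an immediate consequence of Theorem~\ref{para-theorem} via the monotone local search technique of Fomin et al.~\cite{fomin2019exact}, exactly along the lines indicated in the paragraph above. First I would cast \textsc{Co-Path/Cycle Packing} as a subset problem of the form required by that technique: the universe is $V(G)$ and the family of feasible solutions is the collection of all cPCP-sets of $G$. The structural property that has to be checked is that this family is upward closed, i.e., every superset of a cPCP-set is again a cPCP-set. This holds because ``maximum degree at most $2$'' is a hereditary graph property: if $G\setminus S$ has maximum degree at most $2$ and $S\subseteq S'\subseteq V(G)$, then $G\setminus S'$ is an induced subgraph of $G\setminus S$ and hence also has maximum degree at most $2$. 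So the problem is exactly of the monotone shape to which the local search framework applies.

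Next I would invoke the monotone local search theorem of~\cite{fomin2019exact}: if there is an algorithm that, for every instance and every integer $k$, decides in time $c^{k}n^{O(1)}$ whether a solution of size at most $k$ exists, then there is an algorithm computing a minimum-size solution (equivalently, solving the problem for every $k\le n$) in time $(2-1/c)^{n+o(n)}$. Theorem~\ref{para-theorem} supplies such an algorithm with $c=2.8192$, yielding running time $(2-1/2.8192)^{n+o(n)}$. Then I would do the arithmetic: $1/2.8192>0.3547$, so $2-1/2.8192<1.64529<1.6453$, and since the base of the exponential is \emph{strictly} below $1.6453$, the sub-exponential factor $2^{o(n)}$ and the polynomial overhead are absorbed, which gives the claimed $O(1.6453^{n})$ bound.

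I do not expect a real obstacle in this proof; the only two points deserving care are (i) verifying that the hypotheses of the cited monotone local search theorem are met, which reduces precisely to the hereditariness observation in the first paragraph, and (ii) ensuring that rounding $2-1/2.8192$ up to $1.6453$ genuinely swallows the $2^{o(n)}n^{O(1)}$ slack, which it does because the inequality $2-1/2.8192<1.6453$ is strict.
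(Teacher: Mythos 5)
Your proposal is correct and follows essentially the same route as the paper: both apply the monotone local search framework of Fomin et al.~\cite{fomin2019exact} to the $O^*(2.8192^k)$ parameterized algorithm of Theorem~\ref{para-theorem}, obtaining a $(2-1/2.8192)^{n+o(n)} = O(1.6453^n)$ bound. Your added checks (upward closure of the family of cPCP-sets and the strictness of $2-1/2.8192 < 1.6453$) are exactly the details the paper leaves implicit.
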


\section{A Parameterized Algorithm for \textsc{Co-Path Packing}}\label{cpp-section}

In this section, we show a randomized $O^*(2.9241^{k})$ time algorithm for \textsc{Co-Path Packing}. 
We also use the method for \textsc{Co-Path/Cycle Packing}, which consists of two phases. 
The first one is the dynamic programming phase, and the second one is the branch-and-search phase.

In the dynamic programming phase, it is more difficult to solve \textsc{Co-Path Packing} based on a path decomposition in comparison to \textsc{Co-Path/Cycle Packing}.
The reason is that \textsc{Co-Path/Cycle Packing} involves only local constraints, which means that the object's properties can be verified by checking each vertex's neighborhood independently.
For this problem, a typical dynamic programming approach can be used to design a $c^{pw(G)}|V(G)|^{O(1)}$ time algorithm straightforwardly.
In contrast, there is a global connectivity constraint for \textsc{Co-Path Packing}. 
The problem with global connectivity constraint is also called \emph{connectivity-type problem}.
For connectivity-type problems, the typical dynamic programming approach has to keep track of all possible ways the solution can traverse the corresponding separator of the path decomposition that is $\Omega(l^l)$, where $l$ is the size of the separator and hence the pathwidth~\cite{cygan2011solving}.

To obtain a single exponential algorithm parametrized by pathwidth for \textsc{Co-Path Packing}, we use the cut \& count framework.
Previously, cut \& count significantly improved the known bounds for various well-studied connectivity-type problems,
such as \textsc{Hamiltonian Path}, \textsc{Steiner Tree}, and \textsc{Feedback Vertex Set}~\cite{cygan2011solving}. 
Additionally, for \textsc{$k$-Co-Path Set}, cut \& count has been used to obtain a fast parameterized algorithm~\cite{sullivan2016fast}. 
In Section \ref*{cpp-dp-section}, we present a randomized fpt algorithm with complexity $O^*(5^{pw(G)})$ for \textsc{Co-Path Packing}.

In the branch-and-search phase, our algorithm for \textsc{Co-Path Packing} is similar to the branching algorithm for \textsc{Co-Path/Cycle Packing}.
Specifically, our branching algorithm for \textsc{Co-Path Packing} contains two reduction rules, two branching rules, and four steps, while the first reduction rule, both two branching rules, the first two steps, and the last step are the same as the branching algorithm for \textsc{Co-Path/Cycle Packing}.
The branch-and-search phase is shown in Section \ref*{cpp-algorithm-section}.

\subsection{A DP Algorithm via Cut \& Count for \textsc{Co-Path Packing}}\label{cpp-dp-section}

In this section, we use the cut \& count framework to design a $5^{pw}n^{O(1)}$ one-sided error Monte Carlo algorithm with a constant probability of a false negative for \textsc{Co-Path Packing}. %



\subsubsection{Cutting}
In this subsection, we first introduce the definitions of consistent cuts and marked consistent cuts.
Then we show the Isolation Lemma. 


\begin{definition}
    A partition $(V_1, V_2)$ of $V(G)$ is a \emph{consistent cut} of a graph $G$ if there is no edge $uv$ such that $u \in V_1$ and $v \in V_2$. Furthermore, all degree-0 vertices are contained in $V_1$.
\end{definition}

By the definition of consistent cut, each non-isolate connected component must be contained fully in $V_1$ or $V_2$.
So we have that a graph $G$ has exactly $2^{cc(G)-n_0(G)}$ consistent cuts, where $cc(G)$ is the number of connected components and $n_0(G)$ is the number of degree-0 vertices.
A \emph{marker set} is an edge set $M\subseteq E(G)$.

\begin{definition}
    A triple $(V_1, V_2, M)$ is a \emph{marked consistent cut} of a graph $G$ if $(V_1, V_2)$ is a consistent cut and $M \subseteq E(G[V_1])$.
    A marker set is \emph{proper} if it contains at least one edge in each non-isolate connected component of $G$. 
\end{definition}

Note that if a marked consistent cut contains a proper marker set, all vertices are contained in $V_1$.
The reason is that by the definition of marked consistent cut, any non-isolate connected component containing a marker must be contained in $V_1$.
Additionally, all degree-0 vertices are contained in $V_1$.
Therefore, If $M$ is a proper marker set, there exists exactly one corresponding consistent cut with $M$.
Conversely, if $M$ is not a proper marker set, there are an even number of consistent cuts with $M$, since there exists at least one unmarked non-isolate connected component that could be contained in either $V_1$ or $V_2$.

We call an induced subgraph $G'$ of $G$ a \emph{cc-solution} if $G'$ is a collection of disjoint paths.
\begin{definition}
    A pair $(G', M')$ is a \emph{marked-cc-solution} of a graph $G$ if $G'$ is a cc-solution and $M'$ is a proper marker set with size exactly equals to the number of non-isolate connected components of $G'$.
\end{definition}

If there exists a cc-solution with $|V(G)| - k$ vertices, then we can claim that there exists a cPP-set with size $k$.
Furthermore, if there exists a marked-cc-solution with $|V(G)| - k$ vertices, which means that there exists a cc-solution with $|V(G)| - k$ vertices, then we can also claim that there exists a cPP-set with size $k$.

A function $\omega : U \rightarrow \mathbb{Z}$ \emph{isolates} a set family $\mathcal{F} \subseteq 2^U$ if there is an unique $S' \in \mathcal{F}$ with $\omega(S') = \min_{S \in \mathcal{F}} \omega(S)$, where $\omega(X) = \sum_{x\in X}\omega(x)$.

\begin{lemma}[Isolation Lemma~\cite{mulmuley1987matching}]\label{isolation}
    Let $\mathcal{F} \subseteq 2^U$ be a set family over an universe $U$ with $|\mathcal{F}| > 0$. 
    For each $u \in U$, choose a weight $\omega(u) \in \{1, 2, \dots, N\}$ uniformly and independently at random. Then
    \[
        \Pr(\omega \text{ isolates } \mathcal{F}) \geq 1 - \frac{|U|}{N}.
    \]
    
\end{lemma}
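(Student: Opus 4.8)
The plan is to prove the equivalent bound $\Pr(\omega \text{ does not isolate } \mathcal{F}) \le |U|/N$ by a union bound over the elements of $U$, using the standard notion of a \emph{bad element}. Fix an element $u \in U$ and condition on the weights of all elements of $U \setminus \{u\}$. With the convention $\min \emptyset = +\infty$, define
$$\alpha_u = \min_{S \in \mathcal{F},\, u \notin S} \omega(S), \qquad \beta_u = \min_{S \in \mathcal{F},\, u \in S} \omega(S \setminus \{u\}).$$
The key point is that neither quantity depends on $\omega(u)$: the first minimum ranges only over sets that avoid $u$, and the second evaluates $\omega$ on sets with $u$ deleted. Call $u$ \emph{bad} if $\omega(u) = \alpha_u - \beta_u$. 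Since $\alpha_u - \beta_u$ is a fixed integer (or $\pm\infty$) once the other weights are revealed, and $\omega(u)$ is drawn uniformly from $\{1,\dots,N\}$ independently of those weights, at most one value of $\omega(u)$ can make $u$ bad, so $\Pr(u \text{ is bad}) \le 1/N$.

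Next I would show that whenever $\omega$ fails to isolate $\mathcal{F}$, some element is bad. Suppose $S_1 \ne S_2$ are two distinct minimizers, both of weight $W := \min_{S \in \mathcal{F}} \omega(S)$. Pick any $u$ in the symmetric difference $S_1 \triangle S_2$; without loss of generality $u \in S_1 \setminus S_2$. Every $S \in \mathcal{F}$ with $u \notin S$ has $\omega(S) \ge W$, and $S_2$ attains this, so $\alpha_u = W$; every $S \in \mathcal{F}$ with $u \in S$ has $\omega(S \setminus \{u\}) \ge W - \omega(u)$, and $S_1$ attains this, so $\beta_u = W - \omega(u)$. Hence $\omega(u) = W - \beta_u = \alpha_u - \beta_u$, i.e. $u$ is bad.

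Combining the two steps with a union bound over $u \in U$ yields
$$\Pr(\omega \text{ does not isolate } \mathcal{F}) \le \Pr\bigl(\exists\, u \in U:\, u \text{ is bad}\bigr) \le \sum_{u \in U} \Pr(u \text{ is bad}) \le \frac{|U|}{N},$$
which is the claimed inequality. The only place that needs care — and the main obstacle if one is sloppy — is the conditioning argument: one must verify that $\alpha_u - \beta_u$ really is a function of the weights on $U \setminus \{u\}$ alone, so that the independent uniform draw of $\omega(u)$ sees a fixed target value and the $1/N$ bound applies. The degenerate cases where no $S \in \mathcal{F}$ contains $u$ (so $\beta_u = +\infty$) or every $S \in \mathcal{F}$ contains $u$ (so $\alpha_u = +\infty$) are harmless, since then $u$ can never be bad.
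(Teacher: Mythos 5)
Your proof is correct: it is the standard Mulmuley--Vazirani--Vazirani argument (define $\alpha_u$ and $\beta_u$ independently of $\omega(u)$, call $u$ bad when $\omega(u)=\alpha_u-\beta_u$, show two distinct minimizers force a bad element in their symmetric difference, and union bound), and you handle the degenerate $\pm\infty$ cases properly. The paper itself does not prove this lemma but imports it by citation, so there is nothing in the paper your argument deviates from; your write-up matches the classical proof of the cited result.
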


\subsubsection{Counting}

\begin{definition}
    For some values $n \geq 0$, $e \geq 0$ and $m \geq 0$, A pair $(G', (V_1, V_2, M'))$ is a \emph{cc-candidate} of a graph $G$ if $G'$ is an induced subgraph of $G$ with maximum degree 2, exactly $n$ vertices and $e$ edges, and $(V_1, V_2, M')$ is a marked consistent cut of $G'$ with $m$ markers.
\end{definition}

Counting marked-cc-solutions remains difficult on a path decomposition since any feasible marked-cc-solution $(G', M')$ must include a cycle-free induced subgraph $G'$.
However, it is important to note that there is no global connectivity constraint for cc-candidates.
Therefore, counting cc-candidates is easier than counting marked-cc-solutions.
The following lemma shows that we can count cc-candidates instead of marked-cc-solutions in $\mathbb{Z}_2$.
For a graph $G$, assuming there is a weight function $\omega: V(G)\cup E(G) \rightarrow \mathbb{Z}$.
The weight of a marked-cc-solution $(G', M')$ is defined as $\sum_{v\in V(G')} \omega(v) + \sum_{e\in M'} \omega(e)$.
The weight of a cc-candidates $(G', (V_1, V_2, M'))$ is also defined as $\sum_{v\in V(G')} \omega(v) + \sum_{e\in M'} \omega(e)$. 


\begin{lemma}\label{parity}
    The parity of the number of marked-cc-solutions $(G', M')$ in $G$ with $n$ vertices, $e$ edges and weight $w$ is the same as the parity of the number of cc-candidates $(G', (V_1, V_2, M'))$ with $n$ vertices, $e$ edges, $n - e - n_0(G')$ markers, and weight $w$.
\end{lemma}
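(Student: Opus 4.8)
The plan is to exploit the standard cut-\&-count cancellation: when we sum over all consistent cuts of a fixed induced subgraph $G'$, contributions coming from $G'$ that contain a cycle (equivalently, are not cycle-free collections of paths) cancel out in $\mathbb{Z}_2$, while each genuine cc-solution is counted an odd number of times. First I would fix an induced subgraph $G'$ of $G$ with exactly $n$ vertices, $e$ edges, and maximum degree~$2$, and fix a proper marker set $M'$ of size exactly $m = n - e - n_0(G')$. Note that $n - e - n_0(G')$ is precisely the number of non-isolate connected components of $G'$: each non-isolate component with $n_i$ vertices is either a path (with $n_i - 1$ edges) or a cycle (with $n_i$ edges), so $n - n_0(G') - e = (\text{number of path components}) - 0\cdot(\text{cycle components}) \ge 0$, with equality to the number of non-isolate components exactly when there are no cycle components. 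Hence if $G'$ has a cycle component, there is \emph{no} proper marker set of size $m$ (too few markers to hit every component and still respect the size bound), so such $G'$ contribute nothing to the right-hand count; only cycle-free $G'$ — i.e. cc-solutions — can appear. For such a cc-solution $G'$, a proper marker set of size $m$ is exactly a choice of one marker edge in each of the $m$ non-isolate (path) components, so the pair $(G', M')$ is exactly a marked-cc-solution.

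Next I would count, for a fixed cc-solution $G'$ and fixed proper marker set $M'$ of the right size and weight, how many cc-candidates $(G', (V_1, V_2, M'))$ there are with that same $G'$ and $M'$. By the discussion preceding the lemma, since $M'$ is proper every non-isolate component of $G'$ must lie in $V_1$, and every degree-$0$ vertex lies in $V_1$ by definition of consistent cut; hence $V_1 = V(G')$ and $V_2 = \emptyset$ is the unique consistent cut compatible with $M'$. So each marked-cc-solution $(G', M')$ corresponds to exactly one cc-candidate, contributing $1$ to the right-hand parity. Conversely, any cc-candidate with $n$ vertices, $e$ edges, $m = n - e - n_0(G')$ markers must — by the arithmetic above — have a cycle-free $G'$ and a marker set of size equal to the component count, forcing $M'$ to be proper and of the exact size; thus it arises from a unique marked-cc-solution. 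The two families are therefore in bijection (not merely equinumerous mod~$2$), which is even stronger than what the lemma asks and in particular matches weights, since the weight of $(G', M')$ and of $(G', (V_1, V_2, M'))$ are defined by the identical formula $\sum_{v \in V(G')}\omega(v) + \sum_{e \in M'}\omega(e)$.

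The main subtlety I expect is the bookkeeping around isolated vertices and the precise meaning of ``proper marker set of size exactly $m$'': one has to be careful that $n_0(G')$ counts degree-$0$ vertices \emph{of $G'$} (not of $G$), and that a non-isolate component uses at least one edge, so the identity ``$n - e - n_0(G') = $ number of non-isolate components $\iff G'$ is cycle-free'' is applied correctly. Once that identity is nailed down, everything else is immediate: the forcing of $V_1 = V(G')$ makes the consistent-cut part trivial, and the marker-counting part is a one-edge-per-component choice, so in fact we get an exact bijection rather than needing any cancellation argument at all. (The cancellation phenomenon only becomes essential one level up, when one sums over all $G'$ without fixing the marker set to have the minimal size; here the size constraint $m = n - e - n_0(G')$ already kills the cycles, which is why the statement can be upgraded to an exact bijection.) I would therefore present the proof as: (i) the counting identity for $n - e - n_0(G')$; (ii) cycles $\Rightarrow$ no valid marker set of that size; (iii) for cycle-free $G'$, propriety of $M'$ forces the unique cut $V_1 = V(G')$; (iv) conclude the bijection and hence equality of parities (and of the weighted counts).
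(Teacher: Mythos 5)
There is a genuine gap here: you treat the right-hand count as if every cc-candidate carried a \emph{proper} marker set, but the definition of a cc-candidate only requires $(V_1,V_2)$ to be a consistent cut of $G'$ and $M'\subseteq E(G'[V_1])$ with $|M'| = n-e-n_0(G')$; no properness is imposed. Consequently your step (ii) (``cycles $\Rightarrow$ no valid marker set of that size, so such $G'$ contribute nothing to the right-hand count'') and the converse direction of your claimed bijection fail. Concretely, if $G'$ is a triangle then $m = n-e-n_0(G') = 0$, and $(G',(V_1,V_2,\emptyset))$ is a legitimate cc-candidate for both ways of placing the triangle in $V_1$ or in $V_2$: this $G'$ contributes $2$ (not $0$) to the cc-candidate count while contributing $0$ marked-cc-solutions. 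Likewise a cycle-free $G'$ with a non-proper $M'$ of the prescribed size (e.g.\ two $P_3$ components with both markers inside the same component) yields a positive even number of cc-candidates. So there is no exact bijection between marked-cc-solutions and cc-candidates; only the parities agree, and the proof genuinely needs the cancellation argument the paper uses: whenever $G'$ contains a cycle or $M'$ is not proper, the number of non-isolate components exceeds (or is not covered by) the $m$ markers, some non-isolate component carries no marker and can be placed on either side of the cut, so the compatible consistent cuts come in pairs and the contribution is even; whereas a proper $M'$ on a cycle-free $G'$ forces the unique cut $V_1 = V(G')$ and contributes exactly $1$ to each side.

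The parts of your write-up that are correct do coincide with the paper: the identity that $n-e-n_0(G')$ equals the number of path components (hence equals the number of non-isolate components exactly when $G'$ is cycle-free), and the analysis that a proper marker set forces $V_1=V(G')$ with matching weights, are exactly Case 3.2 of the paper's proof. But note that if your ``exact bijection'' were true, cc-candidates would already encode the global connectivity constraint, and the pathwidth dynamic program of Lemma~\ref{dp-cc-candidate}, which never checks properness of the marker set, could not count them; the whole point of the cut-and-count framework is that the spurious objects are only cancelled modulo $2$, not excluded outright. To repair your argument, drop the assumption that $M'$ is proper, fix an arbitrary pair $(G',M')$ with the prescribed statistics, and count the consistent cuts compatible with it, splitting into the paper's three cases.
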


\begin{proof}

    Consider an induced subgraph $G'$ of $G$ and a marker set $M'$ of $G'$. Now we consider the following three cases.
    
    Case 1. $|V(G')|\neq n$ or $|E(G')|\neq e$ or $|M'|\neq n - e$ or $\sum_{v\in V(G')} \omega(v) + \sum_{m\in M'} \omega(m) \neq w$: 
    In this case, $(G', M')$ contributes 0 to both the number of marked-cc-solutions and the number of cc-candidates, respectively.

    Case 2. There is a cycle in $G'$:
    In this case, $(G', M')$ is not a feasible marked-cc-solution and contributes 0 to the number of marked-cc-solutions.
    Since there is a cycle in $G'$, we know that $cc(G') > n - e$. 
    The number of non-isolate connected components is greater than $n - e - n_0(G')$.
    So, there exists one non-isolate connected component containing no marker.
    Since this connected component could be contained in either $V_1$ or $V_2$, we have that there are an even number of feasible consistent cuts $(V_1, V_2)$ such that $(G', (V_1, V_2, M'))$ is a cc-candidate. 
    Thus, $(G', M')$ contributes an even number to the number of cc-candidates.

    Case 3. There is no cycle in $G'$:
    In this case, $G'$ is a feasible cc-solution.
    We further consider two cases.

    Case 3.1. $M'$ is not a proper marker set:
    In this case, $(G', M')$ is also not a feasible marked-cc-solution and contributes 0 to the number of marked-cc-solutions.
    By the definition of proper marker set, we have that there is one non-isolate connected component containing no marker.
    Following a similar argument in Case 2, we have that $(G', M')$ contributes an even number to the number of cc-candidates.

    Case 3.2. $M'$ is a proper marker set:
    In this case, $(G', M')$ is a feasible marked-cc-solution and contributes 1 to the number of marked-cc-solutions.
    By the definition of proper marker set, we have that every non-isolate connected component contains a marker.
    Therefore, each connected component can only be contained in $V_1$. 
    There is exactly one feasible consistent cut $(V_1, V_2)$ such that $(G', (V_1, V_2, M'))$ is a cc-candidate. 
    Thus, $(G', M')$ contributes 1 to the number of cc-candidates.

    Thus, For any subgraph $G'$ of $G$ and any marker set $M'$ of $G'$, $(G', M')$ either contributes 1 to both the number of marked-cc-solutions and cc-candidates, or contributes 0 to marked-cc-solutions and an even number to cc-candidates. This lemma holds.
\end{proof}

The following lemma shows the algorithm to count the number of cc-candidates in a given path decomposition.

\begin{lemma}\label{dp-cc-candidate}
    Given a graph $G$ and a path decomposition of $G$ with width $p$. 
    For any $a, n, e, w, m \geq 0$
    Determining the parity of the number of cc-candidates $(G', (V_1, V_2, M'))$ of $G$ with $a$ degree-0 vertices, $n$ vertices, $e$ edges, $m$ markers, and weight $w$ can be solved in $O^*(5^p)$ time and space.
\end{lemma}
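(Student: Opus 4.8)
The plan is to run a standard dynamic programming over a nice path decomposition, where the only non-standard ingredient is the cut part: we carry a colour for each bag vertex that records which side of the consistent cut it lies on. First I would invoke Lemma~\ref{nice} to convert the given path decomposition of width $p$ into a nice path decomposition $(X_1,\dots,X_r)$ with $X_1=X_r=\emptyset$ and $r$ polynomial in $|V(G)|$, at the cost of only a polynomial-time preprocessing. The DP table is then indexed by a bag $X_i$ together with a \emph{state} assigning to each $v\in X_i$ one of the five labels: ``$0$'' (vertex deleted, i.e. not in $G'$), ``$1_a$/$1_b$'' (vertex in $G'$, currently of degree $0$ in the part of $G'$ seen so far, and placed in $V_1$ or $V_2$ respectively)—wait, that is only one label for degree-$0$; more precisely the labels are: deleted; in $V_1$ with current $G'$-degree $0$; in $V_1$ with current $G'$-degree $1$; in $V_1$ with current $G'$-degree $2$; in $V_2$ with current $G'$-degree $0$, $1$, or $2$. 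Since degree-$0$ vertices of $G'$ must go to $V_1$, the $V_2$-with-degree-$0$ state is vacuous, and a vertex only moves ``out'' of a bag once its degree in $G'$ is finalized, so at the moment of forgetting we distinguish degree $0,1,2$; with the side label this gives the $5$ states per vertex (deleted, $V_1$-deg $1$, $V_1$-deg $2$, $V_2$-deg $1$, $V_2$-deg $2$), while degree-$0$-in-$V_1$ vertices are simply forgotten immediately. Hence $5^{|X_i|}\le 5^{p+1}$ states per bag.

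Next I would specify, for each entry, that we store the whole vector of counts indexed by $(a,n,e,w,m)$—the number of degree-$0$ vertices of $G'$ fixed so far, the number of vertices, edges, markers, and total weight—taken modulo $2$; since $a,n\le |V(G)|$, $e\le |E(G)|$, $m\le|E(G)|$ and $w$ is polynomially bounded (weights are in $\{1,\dots,N\}$ with $N$ polynomial, over a polynomial-size universe), this is a polynomial-size table attached to each of the $5^{p+1}$ states, so total space is $O^*(5^p)$. Then I would describe the transitions at each of the three bag types of a nice decomposition. At an \emph{introduce} node adding vertex $v$: we branch over $v$'s label; if $v$ is deleted, $n,e,w,m$ are unchanged; if $v\in G'$ we may increment $n$ by $1$ and $w$ by $\omega(v)$, and we must decide which of the already-present neighbours of $v$ in the bag are also in $G'$ and whether each such edge $vu$ is ``active'' (both endpoints in $V_1$ and the edge chosen as a marker, contributing $1$ to $e$, $1$ to $m$, $\omega(vu)$ to $w$, and a compatibility check that both endpoints carry a $V_1$ label) or a non-marked edge of $G'$; consistency of the cut is enforced by refusing any edge whose endpoints carry different side-labels, and we update the degree-counters of $v$ and its neighbours accordingly, discarding any partial solution that would push a degree above $2$. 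At a \emph{forget} node removing $v$: we sum the entries over all final legal labels of $v$—if $v$'s label is $V_1$-deg-$0$ we just absorb it (incrementing the ``$a$'' coordinate), and crucially when $v$ leaves the bag its $G'$-neighbourhood is complete, so no further edges can be added to it; we must also be careful that markers live on edges of $G[V_1]$ and each marked edge is counted once, which is naturally handled by deciding the marker status of an edge at the introduce step of its second endpoint. At a \emph{leaf/root} node the bag is empty and the answer for the full graph is read off the $X_r$ entry at coordinates $(a,n,e,w,m)$ with $m=n-e-a$, exactly matching the statement; note that $a=n_0(G')$, so the condition $m=n-e-n_0(G')$ is what we plug in.

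Then I would verify correctness: a routine induction on $i$ shows the table entry at $X_i$ counts, modulo $2$, exactly the pairs $(G'',(V_1'',V_2'',M''))$ on the ``processed'' part of the graph that are compatible with the stored state and have the stored parameter values—the two key invariants being (P3), which guarantees that once a vertex is forgotten none of its incident edges remain to be processed so its $G'$-degree is truly final, and the fact that (P2) forces every edge to appear in some bag so no edge of $G'$ or marker is missed or double-counted. Combining this with the leaf initialization ($X_1=\emptyset$: the empty pair, all parameters $0$, count $1$) yields the claim. The main obstacle I anticipate is bookkeeping rather than conceptual: making sure that (i) each edge of $G'$ and each marker is accounted for exactly once—solved by the convention ``decide an edge's status when its later endpoint is introduced''—and (ii) the degree counters interact correctly with the forget step, since a vertex must not be forgotten while it could still gain a neighbour; but (P3) makes this automatic because any future neighbour would have to co-occur with $v$ in a later bag. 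A secondary point to state carefully is that we work over $\mathbb{Z}_2$, so the transitions are just XOR-sums and the per-bag work is $5^{p+1}$ times a polynomial, giving the claimed $O^*(5^p)$ time and space.
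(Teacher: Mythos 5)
Your overall architecture matches the paper's proof: a parity DP over a nice path decomposition, with per-vertex states recording deleted/kept, cut side, and current degree in $G'$, parameters $(a,n,e,w,m)$ carried modulo $2$, edges settled when their later endpoint is introduced, and the polynomial bound on weights from the Isolation Lemma. However, there is a genuine gap in the one place where the constant $5$ has to come from: your final state set per bag vertex, namely $\{$deleted, $V_1$-deg-$1$, $V_1$-deg-$2$, $V_2$-deg-$1$, $V_2$-deg-$2\}$, is not sufficient. A vertex that is chosen into $G'$ but currently has degree $0$ in the processed part cannot be ``forgotten immediately'': the bags are dictated by the decomposition, and such a vertex may acquire edges from neighbours introduced later while it is still in the bag (and since $G'$ is an \emph{induced} subgraph, those edges are forced once both endpoints are chosen). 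Without a state for it you can neither update its degree, nor check cut consistency of those future edges, nor enforce that it stays isolated, nor count the coordinate $a$ correctly. Your justification for deleting the degree-$0$ states also conflates final degree with current degree: ``degree-$0$ vertices must go to $V_1$'' applies only to vertices that remain isolated in the final $G'$, so discarding the $V_2$-with-current-degree-$0$ possibility (i.e.\ committing a currently isolated vertex to $V_1$) wrongly excludes solutions in which that vertex later gains edges and lies in $V_2$.

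The correct economisation is the opposite of the one you made, and it is the key observation in the paper's DP: a vertex whose current degree has reached $2$ is saturated, so no future edge can be incident to it and its cut side never needs to be remembered again; meanwhile a current-degree-$0$ vertex needs a single side-undecided state (its side is fixed only when its first edge appears, or defaults to $V_1$ if it is forgotten while still isolated, at which point $a$ is incremented). This gives exactly the five states $\{D, R_0, R_1^1, R_1^2, R_2\}$ of the paper and hence $5^{p+1}$ table entries per bag. With your unpruned label set (deleted plus side $\times$ degree) the DP is correct but has $7$ states, giving $O^*(7^p)$, not the claimed $O^*(5^p)$; with your pruned set it is simply incorrect. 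The rest of your argument (transition bookkeeping, the induction using (P2)/(P3), reading off the answer at $m=n-e-a$) is fine once the state space is repaired.
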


\begin{proof}
Similar to the proof of Theorem \ref*{dp}. We simply assume that the path decomposition $P = (X_1, X_2,\cdots, X_r)$ is a nice path decomposition by Lemma \ref*{nice}.

Let $V_i = \bigcup_{j=1}^i X_i$ for each $i\in \{1,2,\cdots, r\}$. We have $V_r = V$.
For any $i\in \{1,2,\cdots, r\}$, let $\{D, R_0, R_1^1, R_1^2, R_2\}$ be an arbitrary partition of $X_i$.
We consider the following subproblem: to count the number of cc-candidates $(G', (V_1, V_2, M'))$ of the induced graph $G[V_i]$ such that 
$(G', (V_1, V_2, M'))$ is a feasible cc-candidate with $a$ degree-0 vertices, $n$ vertices, $e$ edges, $m$ markers, and weight $w$. 
And each vertex in $R_0$ is a degree-0 vertex in $V_1$; each vertex in $R_1^j$ is a degree-1 vertex in $V_j$ $(j = 1, 2)$; each vertex in $R_2$ is a degree-2 vertex in $V_1$ or $V_2$.
We also let $s(i, a, n, e, w, m, D, R_0, R_1^1, R_1^2, R_2)$ denote the corresponding number of the cc-candidates to this problem.
We only need to check the value $s(i = r, a, n, e, w, m, \emptyset, \emptyset, \emptyset, \emptyset, \emptyset)$ since $X_r = \emptyset$.
Next, we use a dynamic programming method to compute all $s(i, a, n, e, w, m, D, R_0, R_1^1, R_1^2, R_2)$.

For the case that $i = 1$, $X_1=\emptyset$ and it trivially holds that $s(1, 0, 0, 0, 0, 0, \emptyset, \emptyset, \emptyset, \emptyset, \emptyset) = 1$ and other input equals to 0. Since the path decomposition is nice, there are two cases for each $i\geq 2$.

    \vspace{1mm}
    \textbf{Case 1:} $X_i = X_{i-1} \cup v$ for some $v\notin X_{i-1}$.
    By the definition of path decomposition, we know that all neighbors of $v$ in the graph $G[V_i]$ must be in $X_{i-1}$.
    For every parameter combination $(a, n, e, w, m)$ and every partition $(D, R_0, R_1^1, R_1^2, R_2)$ of $X_i$, the following holds.
    \begin{enumerate}
        \item $ v\in D $. It holds $s(i,a,n,e,w,m,D,R_0,R_1^1,R_1^2,R_2) = s(i-1,a,n,e,w,m,D\setminus v,R_0,R_1^1,R_1^2,R_2)$.
        
        \item $ v\in R_1^j$ for some $j \in \{1, 2\}$. Let $R' = R_1^1\cup R_1^2\cup R_2$. We further consider two subcases.

        Case (a): $N_{G[V_i]}(v) \cap R_0 \neq \emptyset$ or $|N_{G[V_i]}(v) \cap R'| \neq 1$:

        For this case, the partition $(D, R_0, R_1^1, R_1^2, R_2)$ is infeasible and we simply have that 
        \[
            s(i,a,n,e,w,m,D,R_0,R_1^1,R_1^2,R_2)=0.
        \]

        Case (b): $N_{G[V_i]}(v) \cap R_0 = \emptyset$ and $|N_{G[V_i]}(v) \cap R'| = 1$.
        Let $u$ be the vertex contained in $N_{G[V_i]}(v) \cap R'$.
        We further consider two cases.
        
        Case (b.1): $v\in R_1^1$: 
        
        (i) If $u\in R_1^2$, the consistent cut corresponding to $(R_1^1, R_1^2)$ is infeasible and we simply have that 
        \[
            s(i,a,n,e,w,m,D,R_0,R_1^1,R_1^2,R_2)=0.
        \]
        
        (ii) If $u\in R_1^1$,
        we have that
        \[
        \begin{split}
            &s(i,a,n,e,w,m,D,R_0,R_1^1,R_1^2,R_2) = \\
            &s(i - 1,a + 1, n - 1,e - 1, w - w(v), m, D,R_0 \cup u,R_1^1\setminus \{u,v\},R_1^2,R_2) + \\
            &s(i - 1,a + 1, n - 1,e - 1, w - w(v) - w(uv), m - 1, D,R_0 \cup u,R_1^1\setminus \{u,v\},R_1^2,R_2).
        \end{split}
        \]
        (iii)  If $u\in R_2$,
        we have that
        \[
        \begin{split}
            &s(i,a,n,e,w,m,D,R_0,R_1^1,R_1^2,R_2) = \\
            &s(i - 1,a, n - 1,e - 1, w - w(v), m, D,R_0,R_1^1\setminus v \cup u,R_1^2,R_2\setminus u) + \\
            &s(i - 1,a, n - 1,e - 1, w - w(v) - w(uv), m - 1, D,R_0,R_1^1\setminus v \cup u,R_1^2,R_2\setminus u).
        \end{split}
        \]

        Case (b.2): $v\in R_1^2$: 
        
        (i) If $u\in R_1^1$, the consistent cut corresponding to $(R_1^1, R_1^2)$ is infeasible and we simply have that 
        \[
            s(i,a,n,e,w,m,D,R_0,R_1^1,R_1^2,R_2)=0.
        \]
        
        (ii) If $u\in R_1^2$,
        we have that
        \[
        \begin{split}
            &s(i,a,n,e,w,m,D,R_0,R_1^1,R_1^2,R_2) = \\
            &s(i - 1,a + 1, n - 1,e - 1, w - w(v), m, D,R_0 \cup u,R_1^1,R_1^2\setminus \{u,v\},R_2).
        \end{split}
        \]
        (iii)  If $u\in R_2$,
        we have that
        \[
        \begin{split}
            &s(i,a,n,e,w,m,D,R_0,R_1^1,R_1^2,R_2) = \\
            &s(i - 1,a, n - 1,e - 1, w - w(v), m, D,R_0,R_1^1,R_1^2\setminus v \cup u,R_2\setminus u).
        \end{split}
        \]

        \item $ v\in R_2$. Let $R' = R_1^1\cup R_1^2\cup R_2$. We further consider two subcases.
        
        Case (a): $N_{G[V_i]}(v) \cap R_0 \neq \emptyset$ or $|N_{G[V_i]}(v) \cap R'| \neq 2$:

        For this case, the partition $(D, R_0, R_1^1, R_1^2, R_2)$ is infeasible and we simply have that 
        \[
            s(i,a,n,e,w,m,D,R_0,R_1^1,R_1^2,R_2)=0.
        \]

        Case (b): $N_{G[V_i]}(v) \cap R_0 = \emptyset$ and $|N_{G[V_i]}(v) \cap R'| = 2$:
        Let $u_1, u_2$ be the two vertices contained in $N_{G[V_i]}(v) \cap R'$.
        If $u_1$ and $u_2$ are both contained in $V_1$, the edges $u_1v$ and $u_2v$ can be marked or not.
        Let $P = \{ (w - w(v), m),
        ( w - w(v) - w(u_1v), m - 1),
        ( w - w(v) - w(u_2w), m - 1),
        ( w - w(v) - w(u_1v) - w(u_2v), m - 2)\},$ 
        which is the possible parameter combination set for the case that $u_1$ and $u_2$ are contained in $V_1$. 
        Now we consider all possible cases for $\{u_1, u_2\}$.
        
        (i) If $u_1, u_2\in R_1^1$,
        we have that
        \[
        \begin{split}
            &s(i,a,n,e,w,m,D,R_0,R_1^1,R_1^2,R_2) = \\
            &\sum_{p\in P}s(i-1,a+2, n - 1, e - 2,(p), D,R_0\cup\{u_1,u_2\},R_1^1\setminus\{u_1,u_2\},R_1^2,R_2\setminus v).
        \end{split}
        \]
        (ii) If $u_1\in R_1^1$ and $u_2\in R_1^2$ or $u_1\in R_1^2$ and $u_2\in R_1^1$
        the consistent cut corresponding to $(R_1^1, R_1^2)$ is infeasible and we simply have that 
        \[
            s(i,a,n,e,w,m,D,R_0,R_1^1,R_1^2,R_2)=0.
        \]
        (iii) If $u_1, u_2\in R_1^2$ 
        we have that
        \[
        \begin{split}
            &s(i,a,n,e,w,m,D,R_0,R_1^1,R_1^2,R_2) = \\
            &s(i - 1,a + 2, n - 1, e - 2, w - w(v), m, D, R_0\cup\{u_1,u_2\}, R_1^1, R_1^2\setminus \{u_1,u_2\},R_2\setminus v).
        \end{split}
        \]
        (iv) If $u_1\in R_2$ and $u_2\in R_1^1$ or $u_2\in R_2$ and $u_1\in R_1^1$,
        we only consider the latter case without loss of generality.
        We have that 
        \[
        \begin{split}
            &s(i,a,n,e,w,m,D,R_0,R_1^1,R_1^2,R_2) = \\
            &\sum_{p\in P}s(i - 1,a + 1, n - 1, e - 2, (p), D, R_0 \cup u_1, R_1^1\setminus u_1 \cup u_2, R_1^2, R_2\setminus \{v, u_2\}).
        \end{split}
        \]
        (v) If $u_1\in R_2$ and $u_2\in R_1^2$ or $u_2\in R_2$ and $u_1\in R_1^2$,
        we only consider the latter case without loss of generality.
        We have that 
        \[
        \begin{split}
            &s(i,a,n,e,w,m,D,R_0,R_1^1,R_1^2,R_2) = \\
            &s(i - 1,a + 1, n - 1,e - 2, w - w(v), m, D, R_0\cup u_1, R_1^1, R_1^2\setminus u_1\cup u_2, R_2\setminus \{v, u_2\}).
        \end{split}
        \]
        (vi) If $u_1, u_2\in R_2$ 
        we have that 
        \[
            \begin{split}
                &s(i,a,n,e,w,m,D,R_0,R_1^1,R_1^2,R_2) = \\
                &s(i - 1,a, n - 1,e - 2, w - w(v), m, D,R_0,R_1^1,R_1^2\cup \{u_1,u_2\}, R_2\setminus \{u_1,u_2,v\})+\\
                &\sum_{p\in P}s(i - 1, a, n - 1, e - 2, (p), D, R_0, R_1^1\cup \{u_1,u_2\}, R_1^2, R_2\setminus \{u_1,u_2,v\})
            \end{split}
        \]

    \end{enumerate}


    \vspace{1mm}
    \textbf{Case 2:} $X_i = X_{i-1} \setminus v$ for some $v\in X_{i-1}$.
It is not hard to see that the following equation holds.
    \[
        \begin{split}
            s(i,a,n,e,w,m,D,R_0,R_1^1,R_1^2,R_2) =  &s(i-1,a,n,e,w,m,D\cup v, R_0, R_1^1, R_1^2, R_2)+\\
                                                    &s(i-1,a,n,e,w,m,D, R_0\cup v, R_1^1, R_1^2, R_2)+\\
                                                    &s(i-1,a,n,e,w,m,D, R_0, R_1^1\cup v, R_1^2, R_2)+\\
                                                    &s(i-1,a,n,e,w,m,D, R_0, R_1^1, R_1^2\cup v, R_2)+\\
                                                    &s(i-1,a,n,e,w,m,D, R_0, R_1^1, R_1^2, R_2\cup v).
        \end{split}
    \]

    For each bag $X_i$, there are at most $5^{|X_i|}$ different partitions, where $|X_i|\leq p + 1$. 
    The values $\{a,n,e,w,m\}$ are bounded by a polynomial of the graph size.
    For $i$ and each partition $\{D, R_0, R_1^1, R_1^2, R_2\}$ of $X_i$,
    it takes constant time to compute $s(i,a,n,e,w,m,D,R_0,R_1^1,R_1^2,R_2)$ by using the above recurrence relations.
    Therefore, our dynamic programming algorithm runs in $O(5^{p + 1}\cdot r\cdot \text{poly}(n) )$ time, where $r$ is bounded by a polynomial of the graph size.
    This lemma holds.
\end{proof}

Now we are ready to prove the main theorem in this section.

\begin{theorem}\label{dp-cPP}
    Given a path decomposition of $G$ with width $p$. \textsc{Co-Path Packing} can be solved in $O^*(5^p)$ time and space with failure probability $\leq 1/3$.
\end{theorem}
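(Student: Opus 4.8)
The plan is to combine the counting machinery of Lemma~\ref{dp-cc-candidate} with the Isolation Lemma and the parity argument of Lemma~\ref{parity} in the standard cut \& count fashion. First I would fix the target: $(G,k)$ is a yes-instance if and only if there is a cc-solution with exactly $n-k$ vertices, where $n=|V(G)|$; equivalently (since any cc-solution can be equipped with a proper marker set of size equal to its number of non-isolate components) if and only if there is a marked-cc-solution with $n-k$ vertices. So it suffices to detect whether the family of marked-cc-solutions on $n-k$ vertices is nonempty. The subtlety is that Lemma~\ref{dp-cc-candidate} only gives us the \emph{parity} of a count of cc-candidates, and Lemma~\ref{parity} only equates that parity with the parity of the number of marked-cc-solutions with a fixed vertex count, edge count, and total weight. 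A single parity test can fail to detect a nonempty family whose size is even, which is exactly what the Isolation Lemma is for.

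Concretely, I would let $U = V(G)\cup E(G)$ and draw a random weight $\omega(x)\in\{1,\dots,N\}$ for each $x\in U$, with $N = 2|U|$, so that $|U|/N = 1/2$ and hence with probability $\geq 1/2$ the weight function isolates the relevant family. Then, for every choice of vertex count $n'$, edge count $e'$, number of degree-0 vertices $a'$, and target weight $w$ in the (polynomially bounded) ranges, I would invoke Lemma~\ref{dp-cc-candidate} with $n:=n'$, $e:=e'$, $a:=a'$, $m:=n'-e'-a'$ (this is the marker count forced by Lemma~\ref{parity}), and weight $w$, obtaining the parity of the number of cc-candidates; by Lemma~\ref{parity} this equals the parity of the number of marked-cc-solutions with those parameters. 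The algorithm answers ``yes'' iff for some parameter tuple with $n' = n-k$ this parity is odd. For correctness: if $(G,k)$ is a no-instance there are no marked-cc-solutions at all, so every parity is even and the algorithm correctly says ``no'' (zero false positives, i.e.\ one-sided error). If $(G,k)$ is a yes-instance, let $\mathcal F$ be the nonempty family of (vertex sets of) marked-cc-solutions on $n-k$ vertices viewed as subsets of $U$ (a marked-cc-solution is determined by its vertex set together with its marker edge set, so identify it with that subset of $U$); when $\omega$ isolates $\mathcal F$, the minimum-weight member is unique, so grouping marked-cc-solutions by their parameter tuple $(a',e',w)$ — note $n'$ is fixed to $n-k$ and $m'$ is determined — there is a tuple whose class has exactly one element and hence odd parity, which the algorithm detects. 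Since isolation happens with probability $\geq 1/2 > 2/3$... — more carefully, I would boost: run the whole procedure a constant number of times (say twice) with independent weightings and say ``yes'' if any run does, driving the failure probability below $1/3$.

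For the running time: the number of parameter tuples is polynomial in $n$ (weights are $O(n\cdot N)=O(n^2)$, so $w=O(n^2)$, and $a',n',e',m'$ are all $O(n)$), each invocation of Lemma~\ref{dp-cc-candidate} costs $O^*(5^p)$, and we do $O(1)$ repetitions, so the total is $O^*(5^p)$ time and space. I would also remark that all arithmetic is done in $\mathbb Z_2$, so the stored table entries are single bits, keeping the space bound at $O^*(5^p)$ as claimed.

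The main obstacle — and the place the write-up needs care rather than cleverness — is the bookkeeping that ties together the three layers: making sure the marker count passed to Lemma~\ref{dp-cc-candidate} is exactly the $n-e-n_0(G')$ demanded by Lemma~\ref{parity} for \emph{every} relevant subgraph simultaneously (this is why $a'$, the degree-0 count, must itself be a loop variable rather than something we can fix in advance), and checking that the ``identify a marked-cc-solution with an element of $2^U$'' step is injective so that the Isolation Lemma applies to the right family. Once that correspondence is pinned down, the probability analysis and the complexity bound are routine.
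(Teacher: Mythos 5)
Your proposal is correct and follows essentially the same route as the paper: assign random weights to $U=V(G)\cup E(G)$, apply the Isolation Lemma (Lemma~\ref{isolation}), and for each polynomially bounded parameter tuple use Lemma~\ref{parity} together with the $O^*(5^p)$ parity-counting dynamic program of Lemma~\ref{dp-cc-candidate} to detect a marked-cc-solution, with one-sided error. The only cosmetic differences are that the paper takes $N=3|U|$ so a single run already succeeds with probability at least $2/3$ (instead of $N=2|U|$ plus constant repetition), and it checks all vertex counts $n'\geq |V(G)|-k$ rather than fixing $n'=|V(G)|-k$.
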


\begin{proof}
    Let $\mathcal{F}$ be the set of $(V', M')$ where $(G[V'], M')$ is a feasible marked-cc-solution.
    Let $U$ contain $V(G)$ and $E(G)$. 
    Then $2^U$ denotes all pairs of a vertex subset and an edge subset, which are potential marked-cc-solutions.
    Let $N = 3|U| = 3(|V(G)| + |E(G)|)$. 
    Each vertex and edge is assigned a weight in $[1, N]$ uniformly at random by $\omega$ and the probability of finding an isolating $\omega$ is 2/3 by Lemma \ref*{isolation}. 
    Based on such $\omega$, we have that there exists a weight $w$ such that there is exactly one marked-cc-solution with weight $w$ (if $\mathcal{F}\neq \emptyset$).
    It is easy to see that there exists a cPP-set with size at most $k$ if and only if there exists a marked-cc-solution with $n\geq |V(G)| - k$ and any possible $e$ and $w$.
    Thus, for any possible $e$ and $w$, by Lemma \ref*{parity}, we call the algorithm given in Lemma \ref*{dp-cc-candidate}, 
    then check the parity of the number of cc-candidates $(G', (V_1, V_2, M'))$ of $G$ with any possible $a \geq 0$ degree-0 vertices, $n\geq |V(G)| - k$ vertices, $e$ edges, $n - e - a$ markers and weight $w$ 
    to determine the existence of the number of marked-cc-solution $(G', M')$ with $n\geq |V(G)| - k$ and weight $w$.
    Thus, this lemma holds.
\end{proof}

\subsection{The Whole Algorithm}\label{cpp-algorithm-section}
In this section, we propose a parameterized algorithm for \textsc{Co-Path Packing}.
First, in Section 4.2.1, we show that \textsc{Co-Path Packing} on a proper graph class can be quickly solved by using the dynamic programming algorithm based on path decompositions in Theorem \ref*{dp-cPP}.
Similarly, we will use Lemma \ref*{get-pw} to conclude this result.
Second, in Section 4.2.2, we give a branch-and-search algorithm that will implement some branching steps on special local graph structures. 
Our branching algorithm for \textsc{Co-Path Packing} is similar to the branching algorithm for \textsc{Co-Path/Cycle Packing}.
Specifically, our branching algorithm for \textsc{Co-Path Packing} contains two reduction rules, two branching rules, and four steps, while the first reduction rule, both two branching rules, the first two steps and the last step are the same as the branching algorithm for \textsc{Co-Path/Cycle Packing}. 
When all the steps cannot be applied, we show that the graph must be a proper graph and then the algorithm in Section \ref*{cpp-dp-section} can be called directly to solve the problems.

\subsubsection{Proper Graphs with Small Pathwidth}
Recall that a graph is called proper if it satisfies the following conditions:
\begin{enumerate}
    \item The maximum degree of $G$ is at most 4.
    \item For any degree-4 vertex $v$, all neighbors are of degree at most 2.
    \item For any degree-2 vertex $v$, at least one vertex in $N(v)$ is of degree at least 3. 
    \item Each connected component contains at least 6 vertices.
\end{enumerate}

Combine Lemma \ref*{get-pw} for \textsc{Co-Path Packing} with Lemma \ref*{dp-cPP}, we have the following lemma.
\begin{lemma}\label{solve-proper-cPP}
    \textsc{Co-Path Packing} on proper graphs can be solved in $O^*(2.9241^{k})$ time with probability at least 2/3.
\end{lemma}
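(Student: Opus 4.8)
The plan is to mirror the proof of Lemma~\ref{solve-proper}, replacing the deterministic dynamic programming routine of Theorem~\ref{dp} with the randomized cut \& count routine of Theorem~\ref{dp-cPP}. First I would invoke Lemma~\ref{get-pw} in its ``resp.\ cPP-set'' form: in polynomial time it either certifies that the proper graph $G$ has no cPP-set of size at most $k$ — in which case we report a no-instance — or it outputs a path decomposition of width at most $\frac{2k}{3}+\epsilon k$ for a parameter $\epsilon>0$ that we are free to fix. The correctness of this step for cPP-sets rests on Lemma~\ref{pw-exist}, which is already stated for both cPCP- and cPP-sets, so nothing new is needed there; in particular Lemma~\ref{pw-exist} also yields $|V(G)|\le 100k$, which we will use below.

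Next I would feed this path decomposition into the algorithm of Theorem~\ref{dp-cPP}, which solves \textsc{Co-Path Packing} in $O^*(5^p)$ time and space with failure probability at most $1/3$, where $p$ is the width. Substituting $p\le \frac{2k}{3}+\epsilon k$ gives running time $O^*\!\left(5^{2k/3+\epsilon k}\right)$. Since $5^{2/3}<2.9241$, choosing $\epsilon$ sufficiently small (e.g.\ $\epsilon<10^{-6}$) makes $5^{2/3+\epsilon}<2.9241$, so the total running time is $O^*(2.9241^{k})$, and the failure probability is inherited unchanged from Theorem~\ref{dp-cPP}, i.e.\ the algorithm succeeds with probability at least $2/3$.

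The only subtlety — not really an obstacle — is the bookkeeping of the success probability and of the hidden polynomial factors. Lemma~\ref{get-pw} is deterministic, so no probability is lost in the reduction to a bounded-width instance, and the sole source of randomness is the isolation step inside Theorem~\ref{dp-cPP}; hence the $\le 1/3$ bound carries over verbatim. One should also check that the polynomial factors suppressed by $O^*(\cdot)$ — namely the length $r$ of the nice path decomposition and the ranges of the auxiliary parameters $a,n,e,w,m$ appearing in Lemma~\ref{dp-cc-candidate} — stay polynomial; this is immediate from $|V(G)|\le 100k$, since each of these quantities is polynomially bounded in $|V(G)|$ and hence in $k$. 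Putting these pieces together establishes the claimed bound.
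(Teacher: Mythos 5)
Your proposal is correct and follows essentially the same route as the paper: apply Lemma~\ref{get-pw} to either reject or obtain a path decomposition of width at most $\frac{2k}{3}+\epsilon k$, then run the $O^*(5^p)$ cut \& count algorithm of Theorem~\ref{dp-cPP}, using $5^{2/3}<2.9241$ and a sufficiently small $\epsilon$. Your extra remarks on the success probability being unaffected by the deterministic decomposition step and on the polynomial bounds via $|V(G)|\le 100k$ are accurate refinements the paper leaves implicit.
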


\begin{proof}
    We first call the algorithm in Lemma \ref*{get-pw}.
    If the algorithm decides that $G$ has no $cPP$-set of size at most $k$, we claim that $(G, k)$ is a no-instance.
    Otherwise, we can obtain a nice path decomposition of width at most $\frac{2k}{3} + \epsilon k$.
    Then, we call the algorithm in Theorem \ref*{dp-cPP}. This algorithm runs in $O^*(5^{2k/3+\epsilon k}) = O^*(2.9241^{k})$ with failure probability $\leq 1/3$, where we choose $\epsilon < 10^{-6}$.
    This lemma holds.
\end{proof}


\subsubsection{A Branch-and-Search Algorithm}
In this subsection, we provide a branch-and-search algorithm for \textsc{Co-Path Packing}, which is denoted by ${\tt cPP}(G, k)$.
Our algorithm contains several reduction and branching steps. 
After recursively executing these steps, we will get a proper graph and then call the dynamic programming algorithm in Lemma \ref*{solve-proper-cPP} to solve it.

\vspace{2mm}
\noindent\textbf{4.2.2.1. Reduction and Branching Rules.}

Firstly, we present two reduction rules. 
The first reduction rule is Reduction Rule \ref*{rrule-1} for \textsc{Co-Path/Cycle Packing} and Reduction Rule *2 is a new reduction rule for \textsc{Co-Path Packing}.

\vspace{2mm}
\noindent\textbf{Reduction-Rule 1.}
\textit{
    If there is a connected component $C$ of the graph such that $|V(C)|\leq 6$, then run a brute force algorithm to find a minimum cPP-set $S$ in $C$, delete $C$ and include $S$ in the deletion set.
}
\vspace{2mm}

A path $v_0v_1\dots v_{h-1}v_h$ is called a \emph{degree-two-path} if the two vertices $v_0$ and $v_h$ are of degree not 2 and the other vertices $v_1\dots v_{h-1}$ are of degree 2, where we allow $v_0 = v_h$.

\begin{lemma}\label{reduction-2-lemma-cPP}
    For a degree-two-path $P = v_0v_1\dots v_{h-1}v_h$ with $h \geq 4$ in $G = (V, E)$, let $G' = (V\setminus v_2, E\setminus\{v_1v_2, v_2v_3\} \cup \{v_1v_3\})$, 
    we have that $(G, k)$ is a yes-instance if and only if $(G', k)$ is a yes-instance.
\end{lemma}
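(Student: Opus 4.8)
The plan is to show that $G'$ is obtained from $G$ by \emph{suppressing} the internal degree-$2$ vertex $v_2$ (equivalently, $G$ is obtained from $G'$ by subdividing the edge $v_1v_3$), and to exploit the elementary fact that suppressing or subdividing at a vertex of degree at most $2$ sends a disjoint union of paths to a disjoint union of paths and never creates a cycle. This single observation will drive both directions of the equivalence.

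First I would record that the operation is well defined. Since $h \ge 4$, the vertices $v_1, v_2, v_3$ are three distinct internal vertices of the degree-two-path, each of degree exactly $2$ in $G$, and they are distinct from the endpoints $v_0, v_h$; moreover $N(v_1) = \{v_0, v_2\}$ forces $v_1 v_3 \notin E(G)$, so $G'$ is again a simple graph, in which $v_1$ and $v_3$ still have degree $2$ and $v_0 v_1 v_3 v_4 \cdots v_h$ is again a degree-two-path. Throughout I would keep in mind that $v_2 \notin V(G')$, so any cPP-set of $G'$ automatically avoids $v_2$.

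For the direction $(G,k) \Rightarrow (G',k)$: given a cPP-set $S$ of $G$ with $|S| \le k$, I would first argue that we may assume $v_2 \notin S$. If $v_2 \in S$, then, depending on $S \cap \{v_1, v_3\}$, either $S \setminus \{v_2\}$ is already a cPP-set (when $v_1 \in S$ or $v_3 \in S$, since then $v_2$ re-enters $G\setminus S$ with degree at most $1$ and only extends a path by a leaf or adds an isolated vertex), or $(S \setminus \{v_2\}) \cup \{v_1\}$ is a cPP-set of the same or smaller size (the standard leaf-shift argument: $v_1$ leaves the remaining graph and $v_2$ takes over its role as a degree-$\le 1$ vertex, so no cycle appears). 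Once $v_2 \notin S$ we have $S \subseteq V(G')$, and $G'\setminus S$ is obtained from $G\setminus S$ either by suppressing the degree-$2$ vertex $v_2$ (when $v_1,v_3 \notin S$) or by deleting the degree-$\le 1$ vertex $v_2$ (otherwise); in both cases $G'\setminus S$ stays a disjoint union of paths, so $S$ witnesses that $(G',k)$ is a yes-instance.

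For the converse $(G',k) \Rightarrow (G,k)$: given a cPP-set $S'$ of $G'$ with $|S'| \le k$, I would claim that $S := S'$ works for $G$. Indeed $G\setminus S'$ is obtained from $G'\setminus S'$ by reinserting $v_2$: if $v_1,v_3 \notin S'$ this subdivides the edge $v_1v_3$ inside the path containing it, and otherwise it merely appends $v_2$ as a leaf (or an isolated vertex), since then $v_2$ has degree at most $1$; either way $G\setminus S'$ is a disjoint union of paths, and $|S|=|S'|\le k$. I expect the only delicate point to be the $v_2\notin S$ normalization in the forward direction — checking case by case on $S\cap\{v_1,v_3\}$ that no cycle is introduced — whereas the subdivision/suppression observation underlying both directions is immediate; the hypothesis $h\ge 4$ is used exactly to guarantee that $v_1,v_2,v_3$ are distinct internal degree-$2$ vertices with $v_1v_3\notin E(G)$, so that contracting at the ``middle'' vertex $v_2$ is a clean homeomorphism of the remaining graph.
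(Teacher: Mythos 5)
Your proposal is correct and follows essentially the same route as the paper's proof: normalize a solution of $G$ so that it avoids $v_2$ (dropping $v_2$ or exchanging it for $v_1$), then use the subdivision/suppression correspondence between $G$ and $G'$ to transfer solutions in both directions. The only difference is cosmetic — the paper first splits off the case where an endpoint $v_0$ or $v_h$ has degree $1$ and leaves the converse as ``a similar argument,'' whereas you handle both directions uniformly and spell out the reverse direction (taking $S=S'$ directly), which is a slightly cleaner write-up of the same idea.
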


\begin{proof}
    If one of $\{v_0, v_h\}$ is a degree-1 vertex, without loss of generality, we say $v_0$ is a degree-1 vertex.
    In this case, it is easy to see that vertices $\{v_0, v_1, \dots, v_{h-1}\}$ are not contained in any minimum solution, and this lemma holds.

    If $v_0$ is of degree at least 3, let $S$ be a minimum cPP-set for $G$. We consider the following cases.

    (i) $v_1\in S$: In this case, $v_2\notin S$ since $S$ is a minimum cPP-set. Vertex set $S$ is also a minimum cPP-set for $G'$.

    (ii) $v_1\notin S$: If $v_2\in S$, it is easy to check that $S' = S\setminus v_2 \cup v_1$ is a feasible cPP-set with size $|S|$.  Vertex set $S'$ is a minimum cPP-set for $G'$. If $v_2\notin S$, clearly $S$ is also a minimum cPP-set for $G'$.

    Let $S'$ be a minimum cPP-set for $G'$. 
    By the similar argument, there exists a minimum cPP-set $S''$ with size $|S'|$ for $G$.
\end{proof}

Based on this lemma, we have the following reduction rule for \textsc{Co-Path Packing}.

\vspace{2mm}
\noindent\textbf{Reduction-Rule *2.}
\textit{
    If there is a degree-two-path $P = v_0v_1\dots v_{h-1}v_h$ with $h \geq 4$, then
    return ${\tt cPP}(G' = (V\setminus v_2, E\setminus\{v_1v_2, v_2v_3\} \cup \{v_1v_3\}), k)$.
}
\vspace{2mm}

If Reduction Rules 1 and *2 cannot be applied, we have a property that for any degree-2 vertex $v$, at least one vertex in $N(v)$ is of degree at least 3.

After applying the three simple reduction rules, we will execute some branching steps.
For \textsc{Co-Path Packing}, we have two branching rules, which are the same as the two branching rules for \textsc{Co-Path/Cycle Packing}. 
The correctnesses of these two branching rules for \textsc{Co-Path Packing} are similar to the two branching rules for \textsc{Co-Path/Cycle Packing}.

\vspace{2mm}
\noindent\textbf{Branching-Rule (B1).}
\textit{
    For a vertex $v$ of degree at least 3, branch on it to generate ${|N(v)|\choose 2} + 1$ branches by
    either (i) deleting $v$ from the graph and including it in the deletion set,
    or (ii) for every pair of vertices $u$ and  $w$ in $N(v)$,
    deleting $N(v)\setminus \{u, w\}$ from the graph and including $N(v)\setminus \{u, w\}$ in the deletion set.
}
\vspace{2mm}

\vspace{2mm}
\noindent\textbf{Branching-Rule (B2).}
\textit{
    Assume that a vertex $v$ of degree at least 3 dominates a vertex u.
    Branch on $v$ to generate $1+ (|N(v)|-1)=|N(v)|$ branches by
    either (i) deleting $v$ from the graph and including it in the deletion set,
    or (ii) for each vertex $w\in N(v)\setminus \{u\}$,
    deleting $N(v)\setminus \{u, w\}$ from the graph and including $N(v)\setminus \{u, w\}$ in the deletion set.
}
\vspace{2mm}


\vspace{2mm}
\noindent\textbf{4.2.2.2. Steps.}

When we execute one step, we assume that all previous steps are not applicable in the current graph anymore.
In this subsection, we present four steps: Step 1, Step 2, Step *3 and Step *4 for \textsc{Co-Path Packing}.
Steps 1 and 2 are the Steps 1 and 2 for \textsc{Co-Path/Cycle Packing}.
Step *3 is a new step designed for \textsc{Co-Path Packing}.
Step *4 is the Step 5 for \textsc{Co-Path/Cycle Packing}.
 
\vspace{2mm}
\noindent\textbf{Step 1} (Vertices of degree at least 5).
    If there is a vertex $v$ of $d(v) \geq 5$, then branch on $v$ with Branching-Rule (B1) to generate ${d(v)\choose d(v) - 2} + 1$ branches.



\vspace{2mm}
\noindent\textbf{Step 2} (Degree-4 vertices dominating some vertex of degree at least 3).
    Assume that there is a degree-4 vertex $v$ that dominates a vertex $u_1$, where $d(u_1)\geq 3$. Without loss of generality, we assume that the other three neighbors of $v$ are $u_2, u_3$ and $u_4$, and $u_1$ is adjacent to $u_2$ and $u_3$ ($u_1$ is further adjacent to $u_4$ if $d(u_1)=4$). 
    We branch on $v$ with Branching-Rule (B2) and get $|N(v)|$ branches.



\vspace{2mm}
\noindent\textbf{Step *3} (Degree-4 vertices in a triangle).
    Assume that a degree-4 vertex $v$ is in a triangle.
    Let $u_1$, $u_2$, $u_3$ and $u_4$ be the four neighbors of $v$, where we assume without loss of generality that $\{v, u_1, u_2\}$ is a triangle. 
    We branch on $v$ with Branching-Rule (B1).
    In the branch of deleting $\{u_3,u_4\}$, we can simply assume that the three vertices $v, u_1$ and $u_2$ are not deleted, otherwise this branch can be covered by another branch and then it can be ignored.
    Since $v$, $u_1$ and $u_2$ form a triangle, we know this case is impossible, so we can ignore this subbranch.
    We generate the following ${4\choose 2}$ branches
    \[
        \begin{split}
            & {\tt cPP}(G \setminus \{v\},k - 1), \\
            & {\tt cPP}(G \setminus (\{u_1, u_i\}),k - |\{u_1, u_i\}|) \mbox{~for each $i = 2, 3, 4$},\\
            \mbox{and} \quad & {\tt cPP}(G \setminus (\{u_2, u_i\}),k - |\{u_2, u_i\}|) \mbox{~for each $i = 3, 4$}.
        \end{split}
    \]

The corresponding recurrence is
\[
    T (k) \leq T (k - 1) + ({4\choose 2} - 1) \times T (k - 2) + 1.
\]
The branching factor of it is 2.7913.
\vspace{2mm}

After Step *3, no degree-4 vertex is in a triangle.

\vspace{2mm}
\noindent\textbf{Step *4} (Degree-4 vertices adjacent to some vertex of degree at least 3).
    Assume there is a degree-4 vertex $v$ adjacent to at least one vertex of degree at least 3.
    Let $u_1$, $u_2$, $u_3$ and $u_4$ be the four neighbors of $v$, where we assume without loss of generality that $d(u_1) \geq 3$.
    First, we branch on $v$ by either (b1) including it in the solution set or (b2) excluding it from the solution set. Next, we focus on the latter case (b2).
    In (b2), we further branch on $u_1$ by (b2.1) either including it in the solution set or (b2.2) excluding it from the solution set.
    For case (b2.1), there are at least $|N(v)|-2=2$ vertices in $N(v)$ that should be deleted by the same argument for Branching-Rule (B1). Thus, we can generate three subbranches by deleting $\{u_1, u_2\}$, $\{u_1, u_3\}$, and $\{u_1, u_3\}$, respectively.
    For case (b2.2),  there are still at least $|N(v)|-2=2$ vertices in $N(v)$ that should be deleted, which can be one of the following three sets $\{u_2, u_3\}$, $\{u_2, u_4\}$, and $\{u_3, u_4\}$.
    Furthermore, there are also at least $|N(u_1)| - 2\geq 1$ vertices in $N(u_1)$ that should be deleted. Note that $v\in N(u_1)$ is not allowed to be deleted now. Thus, at least $|N(u_1)\setminus\{v\}|-1$ vertices in $N(u_1)\setminus\{v\}$ should be deleted.
    We will generate ${{|N(u_1)\setminus\{v\}|}\choose {|N(u_1)\setminus\{v\}|-1}}=|N(u_1)\setminus\{v\}|=d(u_1)-1$ branches by decreasing $k$ by $|N(u_1)\setminus\{v\}|-1=d(u_1)-2$.
    Since after Step *3, the degree-4 vertex $v$ is not in any triangle, we know that $N(u_1)\setminus\{v\}$ is disjoint with $\{u_2,u_3,u_4\}$. For case (b2.2), we will generate $3\times (d(u_1)-1)$ subbranches by decreasing $k$ by at least $2+d(u_1)-2=d(u_1)$ in each, where $d(u_1)=3$ or 4.


\vspace{2mm}


The worst branching factors in the above steps are listed in Table 3.
After Step *4, any degree-4 vertex can be only adjacent to vertices of degree at most 2. It is easy to see that the remaining graph after Step *4 is a proper graph.
We call the algorithm in Lemma \ref*{solve-proper-cPP} to solve the instance for \textsc{Co-Path Packing} in $O^*(2.9241^{k})$ time with probability at least 2/3.

\begin{table}[!t]\label{tb-result-L-cPP}
    \begin{center}
    \caption{The branching factors of each of the steps}
    \begin{tabular}{l|c|c|c|c|c|c}
       \hline
       Steps & Step 1 & Step 2 & Step *3  & Step *4 \\
       \hline
       Branching factors & 2.5445 & 2.3028 & 2.7913 & \textbf{2.8192}\\
       \hline
    \end{tabular}
    \end{center}
    \vspace{-6mm}
\end{table}

\begin{theorem}\label{para-theorem-cPP}
    \textsc{Co-Path Packing} can be solved in $O^*(2.9241^{k})$ time with probability at least 2/3.
\end{theorem}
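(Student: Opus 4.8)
The plan is to combine the branch-and-search phase with the dynamic programming phase, exactly as was done for \textsc{Co-Path/Cycle Packing} in Theorem~\ref{para-theorem}, but now using the randomized DP of Theorem~\ref{dp-cPP} in place of the deterministic DP. Concretely, the algorithm ${\tt cPP}(G,k)$ first exhaustively applies Reduction-Rule~1 and Reduction-Rule~*2, then tries Step~1, Step~2, Step~*3, and Step~*4 in this order, each time under the assumption that all earlier rules and steps are inapplicable. When none of the steps applies, I must argue that the current graph is a \emph{proper} graph; then I invoke Lemma~\ref{solve-proper-cPP} to finish. The correctness of the reduction rules follows from Lemma~\ref{reduction-2-lemma-cPP} (and the trivial brute force in Reduction-Rule~1), and the correctness of the branching steps follows from Branching-Rules (B1) and (B2) together with the domination argument of Lemma~\ref{domination} (adapted to cPP-sets) and the triangle observation used in Step~*3.

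The running-time analysis proceeds step by step. For the branching steps, each generates a recurrence of the form $T(k)\le \sum_i T(k-c_i)+1$, and I read off the branching factors already recorded in Table~\ref{tb-result-L-cPP}: Step~1 gives $2.5445$ (worst case $d(v)=5$), Step~2 gives $2.3028$, Step~*3 gives $2.7913$, and Step~*4 gives at most $2.8192$ (the worst case being $d(u_1)=3$, with $d(u_1)=4$ only better). Hence the whole branch-and-search phase runs in $O^*(2.8192^k)$ time. At the leaves of the search tree the graph is proper, so Lemma~\ref{solve-proper-cPP} solves each leaf instance in $O^*(2.9241^{k})$ time with failure probability at most $1/3$ on that instance. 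Since $2.9241 > 2.8192$, the dominant term is the leaf cost, and the overall running time is $O^*(2.9241^{k})$.

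I still need to nail down two points. First, the structural claim: after Step~*4 is inapplicable, the maximum degree is at most $4$ (Step~1 handled degree $\ge 5$), every degree-4 vertex has all neighbors of degree at most $2$ (otherwise Step~*4 would apply to a degree-4 vertex adjacent to a degree-$\ge 3$ vertex), every degree-2 vertex has a neighbor of degree at least $3$ (otherwise Reduction-Rule~*2 or Reduction-Rule~1 would apply, since a maximal degree-two-path with $h\ge 4$ is reducible and a short one forces a small component), and every connected component has at least $6$ vertices (Reduction-Rule~1). These are precisely the four conditions defining a proper graph, so Lemma~\ref{solve-proper-cPP} applies. Second, the probability bound: since the branch-and-search tree is computed deterministically and only the final DP at each leaf is randomized, I run the DP of Theorem~\ref{dp-cPP} just once on the (essentially unique, up to reductions) proper leaf instance, so the one-sided error stays at $\le 1/3$; a yes-instance is detected with probability $\ge 2/3$ and a no-instance is never misreported. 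I would remark that, if desired, standard probability amplification could push the error below any constant.

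The main obstacle is the structural argument that the leaf graph is proper — in particular, carefully checking that the combination of Reduction-Rule~*2 (which only contracts degree-two-paths of length $\ge 4$) together with Reduction-Rule~1 really does eliminate every degree-2 vertex whose neighbors are both of degree at most $2$: one has to trace a maximal degree-two-path from such a vertex and see that it is either long enough for Reduction-Rule~*2 or short enough that its component has at most $6$ vertices (using that, after Step~2 and Step~*3, degree-4 vertices behave well), which is the only slightly delicate case. Everything else is bookkeeping on recurrences that have already been solved in the table.
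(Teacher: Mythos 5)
Your proposal follows the paper's proof essentially verbatim: the same reduction rules and Steps 1, 2, *3, *4 with the branching factors of Table 3 (worst case $2.8192$), the same structural claim that the residual graph is proper, and the same leaf call to Lemma \ref{solve-proper-cPP}, so the total running time is dominated by $O^*(2.9241^{k})$ and the one-sided error gives success probability at least $2/3$. The only inaccuracy is the remark that the randomized DP is run ``just once on an essentially unique leaf instance'' --- the search tree has many leaves and the DP is invoked at each of them --- but this is harmless because, as you also note, the error is one-sided: a no-instance is never misreported, and a yes-instance is detected at some solution-preserving leaf with probability at least $2/3$, which is exactly the (implicit) argument in the paper.
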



\section{Conclusion}\label{conclusion-section}

In this paper, we show that 
given a path decomposition of width $p$, \textsc{Co-Path Packing} can be solved by a randomized fpt algorithm running in $O^*(5^p)$ time.
Additionally, by combining this algorithm with a branch-and-search algorithm, we show that \textsc{Co-Path/Cycle Packing} can be solved in $O^*(2.8192^k)$ time and \textsc{Co-Path Packing} can be solved in $O^*(2.9241^k)$ time with probability at least 2/3.
For \textsc{Co-Path/Cycle Packing}, the new bottleneck in our algorithm is Step 5, which is to deal with degree-4 vertices not in any triangle.
For \textsc{Co-Path Packing}, the new bottleneck in our algorithm is the dynamic programming phase.
The idea of using path/tree decomposition to avoid bottlenecks in branch-and-search algorithms may have the potential to be applied to more problems.
It would also be interesting to design a deterministic algorithm for \textsc{Co-Path Packing} faster than $O^*(3^k)$.


\bibliography{cPCP}

\begin{thebibliography}{10}

\bibitem{vcerveny2023generating}
Radovan {\v{C}}erven{\`y} and Ond{\v{r}}ej Such{\`y}.
\newblock Generating faster algorithms for d-path vertex cover.
\newblock In {\em International Workshop on Graph-Theoretic Concepts in Computer Science}, pages 157--171. Springer, 2023.

\bibitem{chang2016fixed}
Maw-Shang Chang, Li-Hsuan Chen, Ling-Ju Hung, Peter Rossmanith, and Ping-Chen Su.
\newblock Fixed-parameter algorithms for vertex cover p3.
\newblock {\em Discrete Optimization}, 19:12--22, 2016.

\bibitem{chen2010improved}
Jianer Chen, Iyad~A Kanj, and Ge~Xia.
\newblock Improved upper bounds for vertex cover.
\newblock {\em Theoretical Computer Science}, 411(40-42):3736--3756, 2010.

\bibitem{chen2010linear}
Zhi-Zhong Chen, Michael Fellows, Bin Fu, Haitao Jiang, Yang Liu, Lusheng Wang, and Binhai Zhu.
\newblock A linear kernel for co-path/cycle packing.
\newblock In {\em Algorithmic Aspects in Information and Management: 6th International Conference, AAIM 2010, Weihai, China, July 19-21, 2010. Proceedings 6}, pages 90--102. Springer, 2010.

\bibitem{cygan2015parameterized}
Marek Cygan, Fedor~V Fomin, {\L}ukasz Kowalik, Daniel Lokshtanov, D{\'a}niel Marx, Marcin Pilipczuk, Micha{\l} Pilipczuk, and Saket Saurabh.
\newblock {\em Parameterized algorithms}, volume~5.
\newblock Springer, 2015.

\bibitem{cygan2011solving}
Marek Cygan, Jesper Nederlof, Marcin Pilipczuk, Michal Pilipczuk, Joham~MM van Rooij, and Jakub~Onufry Wojtaszczyk.
\newblock Solving connectivity problems parameterized by treewidth in single exponential time.
\newblock In {\em 2011 IEEE 52nd Annual Symposium on Foundations of Computer Science}, pages 150--159. IEEE, 2011.

\bibitem{fellows2011generalization}
Michael~R Fellows, Jiong Guo, Hannes Moser, and Rolf Niedermeier.
\newblock A generalization of nemhauser and trotter's local optimization theorem.
\newblock {\em Journal of Computer and System Sciences}, 77(6):1141--1158, 2011.

\bibitem{feng2015randomized}
Qilong Feng, Jianxin Wang, Shaohua Li, and Jianer Chen.
\newblock Randomized parameterized algorithms for $p_2$-packing and co-path packing problems.
\newblock {\em Journal of Combinatorial Optimization}, 29:125--140, 2015.

\bibitem{fomin2019exact}
Fedor~V Fomin, Serge Gaspers, Daniel Lokshtanov, and Saket Saurabh.
\newblock Exact algorithms via monotone local search.
\newblock {\em Journal of the ACM (JACM)}, 66(2):1--23, 2019.

\bibitem{fomin2009two}
Fedor~V Fomin, Serge Gaspers, Saket Saurabh, and Alexey~A Stepanov.
\newblock On two techniques of combining branching and treewidth.
\newblock {\em Algorithmica}, 54:181--207, 2009.

\bibitem{harris2022faster}
David~G Harris and NS~Narayanaswamy.
\newblock A faster algorithm for vertex cover parameterized by solution size.
\newblock {\em arXiv preprint arXiv:2205.08022}, 2022.

\bibitem{katrenivc2016faster}
J{\'a}n Katreni{\v{c}}.
\newblock A faster fpt algorithm for 3-path vertex cover.
\newblock {\em Information Processing Letters}, 116(4):273--278, 2016.

\bibitem{kratsch2010exact}
Dieter Kratsch and FV~Fomin.
\newblock {\em Exact exponential algorithms}.
\newblock Springer, 2010.

\bibitem{lampis2023structural}
Michael Lampis and Manolis Vasilakis.
\newblock Structural parameterizations for two bounded degree problems revisited.
\newblock {\em ACM Transactions on Computation Theory}, 2023.

\bibitem{mulmuley1987matching}
Ketan Mulmuley, Umesh~V Vazirani, and Vijay~V Vazirani.
\newblock Matching is as easy as matrix inversion.
\newblock In {\em Proceedings of the nineteenth annual ACM symposium on Theory of computing}, pages 345--354, 1987.

\bibitem{seidman1978graph}
Stephen~B Seidman and Brian~L Foster.
\newblock A graph-theoretic generalization of the clique concept.
\newblock {\em Journal of Mathematical sociology}, 6(1):139--154, 1978.

\bibitem{sullivan2016fast}
Blair~D Sullivan and Andrew van~der Poel.
\newblock A fast parameterized algorithm for co-path set.
\newblock {\em arXiv preprint arXiv:1603.04376}, 2016.

\bibitem{tsur2019parameterized}
Dekel Tsur.
\newblock Parameterized algorithm for 3-path vertex cover.
\newblock {\em Theoretical Computer Science}, 783:1--8, 2019.

\bibitem{tsur2022faster}
Dekel Tsur.
\newblock Faster deterministic algorithms for co-path packing and co-path/cycle packing.
\newblock {\em Journal of Combinatorial Optimization}, 44(5):3701--3710, 2022.

\bibitem{tu2015fixed}
Jianhua Tu.
\newblock A fixed-parameter algorithm for the vertex cover p3 problem.
\newblock {\em Information Processing Letters}, 115(2):96--99, 2015.

\bibitem{van2021generic}
Johan~MM van Rooij.
\newblock A generic convolution algorithm for join operations on tree decompositions.
\newblock In {\em Computer Science--Theory and Applications: 16th International Computer Science Symposium in Russia, CSR 2021, Sochi, Russia, June 28--July 2, 2021, Proceedings 16}, pages 435--459. Springer, 2021.

\bibitem{xiao2016parameterized}
Mingyu Xiao.
\newblock A parameterized algorithm for bounded-degree vertex deletion.
\newblock In {\em International Computing and Combinatorics Conference}, pages 79--91. Springer, 2016.

\bibitem{xiao2017generalization}
Mingyu Xiao.
\newblock On a generalization of nemhauser and trotter's local optimization theorem.
\newblock {\em Journal of Computer and System Sciences}, 84:97--106, 2017.

\bibitem{xiao2017kernelization}
Mingyu Xiao and Shaowei Kou.
\newblock Kernelization and parameterized algorithms for 3-path vertex cover.
\newblock In {\em International Conference on Theory and Applications of Models of Computation}, pages 654--668. Springer, 2017.

\end{thebibliography}

\newpage

\appendix

\centerline{\bf\large Appendix}
\section{Proofs} \label{PROOFS}

\noindent\textbf{Theorem \ref*{dp}}
\textit{
    Given a path decomposition of $G$ with width $p$. For any $d\geq 1$, \textsc{$d$-Bounded-Degree Vertex Deletion} can be solved in $O^*((d+2)^p)$ time and space.
}

\begin{proof}
We can simply assume that the path decomposition $P=(X_1, X_2,\cdots, X_r)$ is a nice path decomposition by Lemma \ref*{nice}.

Let $V_i = \bigcup_{j=1}^i X_i$ for each $i\in \{1,2,\cdots, r\}$. We have $V_r= V$.
For any $i\in \{1,2,\cdots, r\}$, let $\{D, R_0, R_1, \cdots, R_d\}$ be an arbitrary partition of $X_i$.
We consider the following subproblem: to find a minimum-size vertex set $S$ in the induced graph $G[V_i]$ such that $S\cap X_i=D$, the maximum degree of graph $G[V_i\setminus S]$ is at most $d$, and each vertex in $R_j$ $(j=0, i,\cdots,d)$ is a degree-$j$ vertex in $G[V_i\setminus S]$. We also let $s(i, D, R_0, R_1, \cdots, R_d)$ denote the corresponding size of the solution to this problem.
For some partitions $\{D, R_0, R_1, \cdots, R_d\}$ of $X_i$, there may not exist any feasible solution $S$, and we will let $s(i, D, R_0, R_1, \cdots, R_d)=\infty$ for this case.
To solve \textsc{$d$-Bounded-Degree Vertex Deletion}, we only need to check the minimum value among $s(i=r, D, R_0, R_1, \cdots, R_d)$ for all possible partitions $\{D, R_0, R_1, \cdots, R_d\}$ of $X_r$. 
Next, we use a dynamic programming method to compute all $s(i, D, R_0, R_1, \cdots, R_d)$.

For the case that $i = 1$, $X_1=\emptyset$ and it trivially holds that $s(1,\emptyset, \emptyset, \cdots, \emptyset) = 0$. Since the path decomposition is nice, there are two cases for each $i\geq 2$.

    \vspace{1mm}
    \textbf{Case 1:} $X_i = X_{i-1} \cup \{v\}$ for some $v\notin X_{i-1}$.
    By the definition of path decomposition, we know that all neighbors of $v$ in the graph $G[V_i]$ must be in $X_{i-1}$.
    For every partition $(D, R_0, R_1, \cdots, R_d)$ of $X_i$, the following holds.
    \begin{enumerate}
        \item $ v\in D $. It holds $s(i,D, R_0, R_1, \cdots, R_d) = 1 + s(i-1,D\setminus \{v\}, R_0, R_1, \cdots, R_d)$.
        \item $ v\in R_j$ for some $j \in \{0, 1, \cdots, d\}$. We further consider two subcases.

        Case (a): $N_{G[V_i]}(v) \cap R_0 \neq \emptyset$ or $|N_{G[V_i]}(v) \cap \bigcup_{l = 1}^{d} R_l| \neq j$.
        For this case, the partition $(D, R_0, R_1, \cdots, R_d)$ is infeasible and we can simply let $s(i,D, R_0, R_1$, $\cdots, R_d)=\infty$.

        Case (b): $N_{G[V_i]}(v) \cap R_0 = \emptyset$ and $|N_{G[V_i]}(v) \cap \bigcup_{l = 1}^{d} R_l| = j$.
        Let $ W = N_{G[V_i]}(v) \cap (\bigcup_{l = 1}^{d} R_l$), and $W_l = W\cap R_l$ for $l = 1, 2, \cdots, d$. Let $W_0 = W_{d + 1} = \emptyset$.
        We have that
        \[
            s(i,D, R_0, R_1, \cdots, R_d) = s(i-1,D, R_0', R_1', \cdots, R_j'\setminus \{v\}, \cdots, R_d'),
        \]
        where $R_l' = (R_l\setminus W_l) \cup W_{l + 1}$ for $l = 0, 1, \cdots, d$.

    \end{enumerate}


    \vspace{1mm}
    \textbf{Case 2:} $X_i = X_{i-1} \setminus \{v\}$ for some $v\in X_{i-1}$.
It is not hard to see that the following equation holds.
    \[
        \begin{split}
            s(i,D, R_0, R_1, \cdots, R_d) = \min\{&s(i-1,D\cup \{v\}, R_0, R_1, \cdots, R_d),\\
                                                  &s(i-1,D, R_0\cup \{v\}, R_1, \cdots, R_d),\\
                                                  &s(i-1,D, R_0, R_1\cup \{v\}, \cdots, R_d),\\
                                                  & \cdots\\
                                                  &s(i-1,D, R_0, R_1, \cdots, R_d\cup \{v\})\}.
        \end{split}
    \]

    For each bag $X_i$, there are at most $(d + 2)^{|X_i|}$ different partitions, where $|X_i|\leq p + 1$. For $i$ and each partition $\{D, R_0, R_1, \cdots, R_d\}$ of $X_i$,
    it takes at most $O(d)$ time to compute $s(i,D, R_0, R_1, \cdots, R_d)$ by using the above recurrence relations.
    Therefore, our dynamic programming algorithm runs in $O((d + 2)^{p + 1}\cdot r\cdot d )$ time, where $r$ is bounded by a polynomial of the graph size.
    This theorem holds.
\end{proof}

\end{document}